\newif\ifsubmission
\newtheorem{theorem}{Theorem}
\newtheorem{lemma}{Lemma}
\newtheorem{definition}{Definition}
\newtheorem{property}{Property}
\newcommand{\A}{\mathcal A}
\newcommand{\DD}{\mathcal D}
\newcommand{\XX}{\mathcal X}
\renewcommand{\SS}{\mathcal S}
\newcommand{\MM}{\mathcal M}
\newcommand{\PP}{\mathcal P}
\newcommand{\YY}{\mathcal Y}
\newcommand{\rarrow}{\tiny \to}
\colorlet{mygreen}{green!60!gray}
    \newcommand{\TODO}[1]{}
    \newcommand{\TODO}[1]{\textcolor{red}{{#1}}}
\begin{document}

\title{Source Distinguishability under Distortion-Limited Attack: an Optimal Transport Perspective}


\author{Mauro Barni$^*$, \IEEEmembership{Fellow, IEEE}, Benedetta Tondi, \IEEEmembership{Student Member, IEEE}
\thanks{M. Barni and B. Tondi are with the Department of Information Engineering and Mathematical Sciences, University of Siena, Via Roma 56, 53100 - Siena, ITALY, phone: +39 0577 234850 (int. 1005), e-mail: \{barni@dii.unisi.it, benedettatondi@gmail.com\}.}
}

\markboth{IEEE TRANSACTIONS ON INFORMATION THEORY, ~Vol.~X, No.~X, XXXXXXX~XXXX}%
{M. Barni, B. Tondi: Source Distinguishability under Distortion-Limited Attack: an Optimal Transport Perspective}

\maketitle

\begin{abstract}
We analyze the distinguishability of two sources in a Neyman-Pearson set-up when an attacker is allowed to modify the output of one of the two sources subject to a distortion constraint. By casting the problem in a game-theoretic framework and by exploiting the parallelism between the attacker's goal and Optimal Transport Theory, we introduce the concept of Security Margin defined as the maximum average per-sample distortion introduced by the attacker for which the two sources can be distinguished ensuring arbitrarily small, yet positive, error exponents for type I and type II error probabilities. Several versions of the problem are considered according to the available knowledge about the sources and the type of distance used to define the distortion constraint. We compute the security margin for some classes of sources and derive a general upper bound assuming that the distortion is measured in terms of the mean square error between the original and the attacked sequence.
\end{abstract}

\begin{IEEEkeywords}
Adversarial signal processing, hypothesis testing, source identification, cybersecurity, game theory, optimal transportation theory,
Earth Mover Distance (EMD), Hoffman's algorithm.
\end{IEEEkeywords}


\IEEEpeerreviewmaketitle

\section{Introduction}

\IEEEPARstart{A}{dversarial} Signal Processing (Adv-SP), sometimes referred to as adversary-aware signal processing, is an emerging research field targeting the study of signal processing techniques explicitly thought to withstand the attacks of one or more adversaries aiming at system failure. Adv-SP methods can be applied to a wide variety of security-oriented applications including 	multimedia forensics, biometrics, digital watermarking, steganography and steganalysis, network intrusion detection, traffic monitoring, video-surveillance, just to mention a few \cite{BarGon13}. Source identification is a common problem in Adv-SP, due to its importance in several applications. In multimedia forensics, for instance, the analyst may want to distinguish which between two sources (e.g. a photo camera and a scanner) generated a given document, or whether a document has undergone a given processing or not. In spam filtering, e-mail messages have to be classified either as spam or authentic messages. In 1-bit watermarking, the detector has to decide whether a document is watermarked or not, while it is the goal of steganalysis to distinguish between cover and stego-images. In yet other situations, the security of a system relies on the capability of distinguishing the profile of malevolent and fair users.

In \cite{BT13}, a game-theoretic framework is proposed to analyze the source identification problem under adversarial conditions. To be specific, \cite{BT13} introduces the so called source identification game. The game is played by a Defender (D) and an Attacker (A) and is defined as follows: given two discrete memoryless sources $X$ and $Y$ with alphabet $\XX$ and probability mass functions (pmf) $P_X$ and $P_Y$, and a test sequence $x^n = (x_1, x_2 \dots x_n)$, the goal of D is to decide between hypothesis $H_0$ that $x^n$ has been drawn from $X$ and hypothesis $H_1$ that $x^n$ has been generated by $Y$. The goal of A is to take a sequence $y^n$ generated by $Y$ and modify it in such a way that D classifies it as being generated by $X$. In doing so, D must ensure that the type I error probability (usually referred to as false positive error probability $P_{fp}$) of deciding for $H_1$ when $H_0$ holds stays below a given threshold, whereas $A$ has to respect a distortion constraint, limiting the amount of modifications he can introduce into $y^n$. The payoff of the game is the type II error probability, or false negative error probability $P_{fn}$, i.e., the probability of deciding for $H_0$ when $H_1$ holds. Of course, D aims at minimizing $P_{fn}$, while A wishes to maximize it. The above scenario accounts for a situation in which $P_X$ corresponds to so-to-say normal conditions and $P_Y$ refers to an anomalous situation. It is the goal of the attacker to modify a sequence produced under anomalous conditions in such a way that the defender does not recognize that the observed system exited the normal state.

The analysis provided in \cite{BT13} assumes that the defender is confined to base its analysis only on first order statistics of $x^n$. Under this assumption, \cite{BT13} derives the asymptotic equilibrium point of the game when the length of the test sequence tends to infinity and the false positive error probability is required to tend to zero exponentially fast with decay rate at least equal to $\lambda$ ($\lambda$ is nothing but the error exponent of the false positive error probability). Given two pmf's $P_X$ and $P_Y$, a false positive error exponent $\lambda$, and the maximum allowed distortion $L_{max}$, the analysis in \cite{BT13} permits to determine whether, at the equilibrium, the false negative error probability $P_{fn}$ tends to 0 or to 1 when $n \rarrow \infty$. This, in turn, permits to define the so-called indistinguishability region $\Gamma(P_X,\lambda,L_{max})$ as the set of pmf's that can not be distinguished reliably from $P_X$ when $n \rarrow \infty$ due to the presence of the attacker. If $P_Y \in \Gamma(P_X,\lambda,L_{max})$, in fact, a strictly positive false negative error exponent can not be achieved and the attacker is going to {\em win} the game. A similar analysis is carried out in \cite{BT12,BTtit} for a scenario in which $P_X$ and $P_Y$ are not known, and the statistics of the two sources are obtained through the observation of training sequences.

\subsection{Contribution}

A drawback with the analysis carried out in \cite{BT13, BT12, BTtit} is the asymmetric role of the false positive and false negative error exponents, namely $\lambda$ and $\varepsilon$ ($\varepsilon = \lim_{n \rarrow \infty} - \frac{1}{n} \log P_{fn}$). In such works, in fact, the defender aims at ensuring a given $\lambda$, but is satisfied with any strictly positive $\varepsilon$. In this paper, we make a more reasonable assumption and say that the defender wins the game, i.e. he is able to distinguish between $X$ and $Y$ despite the presence of the adversary, if - at the equilibrium - both error probabilities tend to zero exponentially fast, regardless of the particular values assumed by the error exponents. More precisely, by mimicking Stein's lemma \cite{CandT}, we analyze the behavior of $\Gamma(P_X,\lambda,L_{max})$ when $\lambda \rarrow 0$ to see whether, given a maximum allowable distortion $L_{max}$, it is possible for D to simultaneously attain strictly positive error exponents for the two kinds of error, hence permitting to reliably distinguish between $P_X$ and $P_Y$. Having done so, we will adopt a different perspective and introduce a new distinguishability measure, called Security Margin ($\SS\MM$), defined as the {\em maximum distortion allowed to the attacker, for which two sources can be reliably distinguished}. As we will see, this is a powerful concept that permits to summarize in a single quantity the distinguishability of two sources $X$ and $Y$ under adversarial conditions. In order to derive our main results, we look at the optimum attacker's strategy already derived in \cite{BT13} and \cite{BTtit} from a new perspective, i.e. by paralleling it to {\em optimal transport theory} \cite{Vill09}. Doing so, in fact, allows to derive a very intuitive and insightful interpretation of the optimum attacker's strategy, and will permit us to derive the $\SS\MM$ for a wide class of pmf's in both the discrete and the continuous case. A fast numerical algorithm for the computation of the security margin between any two discrete pmf's will also be presented.

In the framework depicted above, the main results proven in this paper can be summarized as follows

\begin{enumerate}
\item{We compute the best achievable false negative error exponent for a given distortion $L_{max}$ and for a strictly positive, yet arbitrarily small, value of the false positive error exponent $\lambda$ (Theorem \ref{theorem_EMD}, Section \ref{subsec.SMdef});}
\item{We introduce the security margin ($\SS\MM$) concept as the maximum allowed distortion for which two sources $X$ and $Y$ can be distinguished (in the adversarial setup defined in the paper) by ensuring strictly positive error exponents of the two kinds and show that $\SS\MM$ corresponds to the Earth Mover Distance ({\em EMD}) between $P_X$ and $P_Y$ (Definition \ref{def.SMdef}, Section \ref{subsec.SMdef});}
\item{We extend  the analysis to a version of the source identification game in which $P_X$ and $P_Y$ are known only through training sequences (Source Identification game with training data, $SI_{tr}$) and show that the security margin does not change despite the fact that the $SI_{tr}$ is in general more favorable to the attacker than the $SI_{ks}$ game where the exact statistics of the sources are perfectly known to D and A (Theorem \ref{theorem_EMD_tr}, Section \ref{subsec.SMTR});}
\item{By relying on some results in the field of optimal transport theory, we present a number of ways whereby the $\SS\MM$ can be computed efficiently for both discrete and continuous sources (Section \ref{sec.SMcomputation});}
\item{We introduce a new version of the game in which the distortion constraint is expressed in terms of maximum absolute distance between the sequence $y^n$ and the attacked sequence $z^n$ (Theorem \ref{theorem_EMD_Linfty}, Section \ref{sec.maxdiff}). We then extend our analysis to the new version of the game. This is a very interesting, yet not trivial, scenario, since in many practical applications the quality of the attacked sequence is judged in terms of maximum distance ($L_{\infty}$ norm), rather than in terms of average distance.}
\end{enumerate}

It is worth stressing that point 4) complements and generalizes some recent studies in the field of Multimedia security, namely \cite{FernPedro2013,Balado} regarding image counterforensics, and \cite{balado2013permutation} related to perfect steganography. As a matter of fact, all the solutions proposed in those papers can be seen as particular instances of the general optimal transport problem addressed (and solved) in Section \ref{sec.SMcomputation}. Finally, point 5) relies on a generalization of the results proven in \cite{BT13, BT12,BTtit}, where the analysis was restricted to the case of additive distortion measures. As we will show, such an analysis can be extended to the case of $L_{\infty}$ distortion, opening the way to the application of our methodology to all the scenarios in which the distortion constraint is applied uniformly to the elements of $y^n$.

Some of the results presented in this paper have already been stated in \cite{BTGsip}. With respect to \cite{BTGsip}, however, the current paper contains a complete proof of all the main theorems, the extension to the case of source identification with training data, the derivation of a fast numerical methodology to compute the security margin between any two discrete sources, and the extension of the analysis to the case of $L_{\infty}$ distortion.

The rest of this paper is organized as follows. In Section \ref{sec.not}, we introduce the notation used throughout the paper, give some definitions and review some basic concepts in game theory. In Section \ref{sec.SI_ks}, we give a rigorous definition of the addressed problem and summarize the main results proven in \cite{BT13}. Section \ref{sec.SM} is the core of the paper: we use optimal transport to shed new light on the addressed problem and introduce the security margin concept. In Section \ref{sec.SI_tr}, we extend the analysis to cover the case of source identification with training data. In Section \ref{sec.SMcomputation}, we derive the security margin for several classes of sources, and provide an efficient algorithm to compute it when a close form solution can not be found. Section \ref{sec.maxdiff} extends the analysis to a situation in which the allowed distortion is defined in terms of $L_{\infty}$ distance. The paper ends in Section \ref{sec.conc}, with some conclusions and highlights for future research. The most technical proofs are given in the appendices to avoid interrupting the flow of ideas in the main body of the paper.

\section{Notations and definitions}
\label{sec.not}

In this section we introduce the notation and definitions used throughout the paper. We will use capital letters to indicate discrete memoryless sources (e.g. $X$). Sequences of length $n$ drawn from a source will be indicated with the corresponding lowercase letters (e.g. $x^n$); accordingly, $x_i$ will denote the $i-$th element of a sequence $x^n$. The alphabet of an information source will be indicated by the corresponding calligraphic capital letter (e.g. $\XX$). The probability mass function (pmf) of a discrete memoryless source $X$ will be denoted by $P_X$, while the cumulative mass function will be indicated with $C_X$. For the sake of simplicity, the same notation will be adopted to denote the probability density function (pdf) of a continuous random variable $X$. The calligraphic letter $\PP$ will be used to indicate the class of all the probability density functions. In addition, the notation $P_X$ will be also used to indicate the probability measure ruling the emission of sequences from a source $X$, so we will use the expressions $P_X(a)$ and $P_X(x^n)$ to indicate, respectively, the probability of symbol $a \in \XX$ and the probability that the source $X$ emits the sequence $x^n$, the exact meaning of $P_X$ being always clearly recoverable from the context wherein it is used. Finally, we will use the notation $P_X(A)$ to indicate the probability of the event $A$ (be it a subset of $\XX$ or $\XX^n$) under the probability measure $P_X$.

Our analysis relies extensively on the concepts of type and type class defined as follows (see \cite{CandT} and \cite{CandK} for more details). Let $x^n$ be a sequence with elements belonging to a finite alphabet $\XX$. The type $P_{x^n}$ of $x^n$ is the empirical pmf induced by the sequence $x^n$, i.e. $\forall a \in \XX, P_{x^n} (a) = \frac{1}{n} \sum_{i=1}^n \delta(x_i, a)$, where $\delta(x_i,a) = 1$ if $x_i =a$ and zero otherwise. In the following we indicate with $\PP_n$ the set of types with denominator $n$, i.e. the set of types induced by sequences of length $n$. Given $P \in \PP_n$, we indicate with $T(P)$ the type class of $P$, i.e. the set of all the sequences in $\XX^n$ having type $P$.

The Kullback-Leibler (KL) divergence between two distributions $P$ and $Q$ on the same finite alphabet $\XX$ is defined as:
\begin{equation}
    \DD(P||Q) = \sum_{a \in \XX} P(a) \log \frac{P(a)}{Q(a)},
\label{eq.KLdiv}
\end{equation}
where, according to usual conventions, $0 \log 0 = 0$ and $p \log p/0 = \infty$ if $p > 0$.

\subsection{Game theory in a nutshell}

A 2-player game is defined as a 4-uple $G(\SS_1,\SS_2,u_1, u_2)$, where $\SS_1 = \{s_{1,1} \dots s_{1,n_1}\}$ and $\SS_2 = \{s_{2,1} \dots s_{2,n_2}\}$ are the set of actions (usually called strategies) the first and the second player can choose from, and $u_l(s_{1,i}, s_{2,j}), l= 1,2$, is the payoff of the game for player $l$, when the first player chooses the strategy $s_{1,i}$ and the second chooses $s_{2,j}$. A pair of strategies $(s_{1,i}, s_{2,j})$ is called a profile. When $u_1(s_{s1,i}, s_{2,j}) + u_2(s_{1,i}, s_{2,j}) = 0$, the game is said to be a competitive (or zero-sum) game. In the set-up adopted in this paper, $\SS_1$, $\SS_2$ and the payoff functions are assumed to be known to the two players. In addition, we assume that the players choose their strategies before starting the game without knowing the  strategy chosen by the other player (strategic game).

A common goal in game theory is to determine the existence of equilibrium points, i.e. profiles that in {\em some way} represent a {\em satisfactory} choice for both players \cite{Osb94}. The most famous equilibrium notion is due to Nash. Intuitively, a profile is a Nash equilibrium if each player does not have any interest in changing his choice assuming the other does not change his strategy. Despite its popularity, the practical meaning of Nash equilibrium is doubtful, since there is no guarantee that the players will end up playing at the equilibrium. A notion with a more practical meaning is that of {\em dominant equilibrium}. A strategy is said to be strictly dominant for one player if it is the best strategy for the player, regardless of the strategy chosen by the other player. In many cases dominant strategies do not exist, however when one such strategy exists for one of the players, he will surely adopt it (at least under the assumption of  rational behavior). The other players, in turn, will choose their strategies anticipating that the first player will play the dominant strategy. As a consequence, in a two-player game, if a dominant strategy exists the players have only one rational choice called the only rationalizable equilibrium of the game \cite{ChenGames}. Games with the above property are called {\em dominance solvable} games.

\section{The source identification game with known sources}
\label{sec.SI_ks}


In this section, we give a rigorous definition of the problem considered in the paper. In order to make our treatment self-contained and ease the understanding of subsequent derivations, we also summarize the main results proven in \cite{BT13}. With respect to \cite{BT13}, however, we adopt a different perspective that facilitates the interpretation of the attacker's optimal strategy as the solution of an optimal transport problem. As a matter of fact, this can be considered as an important contribution of this paper, since the new perspective opens the way to the adoption of a new, more insightful, methodology to analyze the structure of the game and  the achievable performance.

\subsection{Definition of the $SI_{ks}$ game and equilibrium point}

We start with the definition of the source identification game with known sources ($SI_{ks}$). Given a test sequence $x^n$, we indicate with $H_0$ the hypothesis that $x^n$ has been generated by $P_X$ and with $H_1$ the alternative hypothesis that $x^n$ has been generated by $P_Y$. In order to define the $SI_{ks}$ game, we need to define the set of strategies of D and A and the payoff function.

{\em Defender's strategies.} The set of strategies of the Defender ($\SS_D$) consists of all possible acceptance regions for $H_0$. More precisely, by following \cite{BT13}, we require that D bases its analysis only on the first order statistics of $x^n$. This is equivalent to ask that the acceptance region for hypothesis $H_0$, hereafter referred to as $\Lambda^n$, is a union of type classes\footnote{We use the superscript $n$ to indicate explicitly that $\Lambda^{n}$ refers to $n$-long sequences.}. Since a type class is univocally defined by the empirical pmf of the sequences it contains, $\Lambda^n$ can be seen as a union of types $P \in \PP_n$. We consider an asymptotic version of the game and require that the false positive error probability $P_{fp}$ decreases exponentially with decay rate at least equal to $\lambda$. Under  the above assumptions, the space of strategies of D is given by:
\begin{equation}
    \SS_{D} = \{ \Lambda^n \in 2^{\PP_n}: P_{fp} \le 2^{-\lambda n}\},
\label{eq.SD_KS_as}
\end{equation}
where $2^{\PP_n}$ indicates the power set of $\PP_n$.

{\em Attacker's strategies.} Given a sequence $y^n$ drawn from $Y$, the goal of A is to transform it into a sequence $z^n$ belonging to the acceptance region chosen by D. Let us indicate by $n(i,j)$ the number of times that the $i$-th symbol of the alphabet is transformed into the $j$-th one as a consequence of the attack. Similarly, we indicate by $S^n_{YZ}(i,j) = n(i,j)/n$ the relative frequency with which the $i$-th symbol of the alphabet is transformed into the $j$-th one. In the following, we refer to $S^n_{YZ}$ as {\em transportation map}. Once again, we explicitly indicate that $S^n_{YZ}$ refers to $n$-long sequences by adding the superscript $n$. For any additive distortion measure, the overall distortion introduced by the attack can be expressed in terms of $n(i,j)$; in fact we have:
\begin{equation}
    d(y^n, z^n) = \sum_{i,j} n(i,j) d(i,j),
\label{eq.overall_dist}
\end{equation}
where $d(i,j)$ is the distortion introduced when the symbol $i$ is transformed into the symbol $j$. Similarly, the average per-sample distortion depends only on $S^n_{YZ}$:
\begin{equation}
    \frac{d(y^n, z^n)}{n} = \sum_{i,j} S^n_{YZ}(i,j) d(i,j).
\label{eq.averagel_dist}
\end{equation}
$S^n_{YZ}$ determines also the empirical pmf (i.e. the type) of the attacked sequence. In fact, by indicating with $P_{z^n}(j)$ the relative frequency of symbol $j$ into $z^n$, we have:
\begin{equation}
\label{eq.out_type}
    P_{z^n}(j) = \sum_i S^n_{YZ}(i,j) \triangleq S^n_Z(j).
\end{equation}
Finally, we observe that the attacker can not change more symbols than there are in the sequence $y^n$; as a consequence a map $S^n_{YZ}$ can be applied to a sequence $y^n$ only if:
\begin{equation}
\label{eq.in_type}
    S^n_{Y}(i) \triangleq \sum_{j} S^n_{YZ}(i,j) = P_{y^n}(i).
\end{equation}
Equations (\ref{eq.out_type}) and (\ref{eq.in_type}) suggest an interesting interpretation of $S^n_{YZ}$, which can be seen as the joint empirical pmf between the sequences $y^n$ and $z^n$. In the same way, $S^n_{Y}$ and $S^n_{Z}$ correspond, respectively, to the empirical pmf of $y^n$ and  $z^n$.

By remembering that $\Lambda^n$ depends only on the empirical pmf of the test sequence (i.e., on its type), and given that the empirical pmf of the attacked sequence depends  on $S^n_Z$ only through $S^n_{YZ}$, we can define the action of the attacker as the choice of a transportation map among all {\em admissible} maps, a map being admissible if:
\begin{align}
\label{eq.admissiblemap1}
    & S^n_{Y} = P_{y^n} \\ \nonumber
    & \sum_{i,j} S^n_{YZ}(i,j) d(i,j) \le L_{max},
\end{align}
where the second condition expresses the per-letter distortion constraint the attacker is subject to, and $L_{max}$ is the maximum allowable (average) per-letter distortion. In the following, we will refer to the set of admissible maps as $\A^n(L_{max}, P_{y^n})$. With the above definitions, the space of strategies of the attacker is the set of all the possible ways of associating an admissible transformation map to the to-be-attacked sequence.  In the following, we will refer to the result of such an association as $S^n_{YZ}(y^n)$, or $S^n_{YZ}(i,j;y^n)$, when we need to refer explicitly to the relative frequency with which the symbol $i$ is transformed into the symbol $j$. In the same way, $S^n_Z(j;y^n)$ indicates the output marginal of $S^n_{YZ}(i,j;y^n)$\footnote{With regard to the input marginal, of course, we always have $S^n_Y(i;y^n) = P_{y^n}(i) ~ \forall i$.}.
By adopting the above symbolism, the space of strategies for the attacker can be defined as:
\begin{equation}
    \SS_{A} = \{ S^n_{YZ}(i,j;y^n) : S^n_{YZ}(i,j)  \in \A^n(L_{max}, P_{y^n}) \}.
\label{eq.SA_TR_as}
\end{equation}
{\em The payoff.} Having fixed the maximum false positive error probability, we adopt a typical Neyman-Pearson approach and let the payoff correspond  to the false negative error probability, that is:
\begin{equation}
\label{equation}
    u_D = -u_A = - \sum_{y^n: S^n_{Z}(j;y^n) \in \Lambda^n} P_Y(y^n),
\end{equation}
where $P_Y(y^n)$ is the probability that the source $Y$ outputs the sequence $y^n$.

{\em Equilibrium point.} Given the above formulation of the $SI_{ks}$ game, the main result of \cite{BT13} is summarized by the following theorem.\footnote{In this paper we use a different formulation of the theorem with respect to \cite{BT13} so to adapt it to the new formalism based on the concept of transportation map adopted here.}
\begin{theorem}
Let
\begin{equation}
    \Lambda^{n,*} = \left\{P \in \PP_n: \DD(P || P_X) < \lambda - |\XX| \frac{\log(n+1)}{n} \right\},
\label{eq.opt_lambda_KS_as}
\end{equation}
and
\begin{equation}
    S_{YZ}^{n,*}(i,j;y^n) = \arg\min_{S^n_{YZ} \in \A^n(L_{max},P_{y^n})} \DD(S^n_{Z} || P_X).
\label{eq.opt_AD_KS_as}
\end{equation}
Then $\Lambda^{n,*}$ is a dominant equilibrium for D and the profile $(\Lambda^{n,*},  S_{YZ}^{n,*}(i,j;y^n))$ is the only rationalizable equilibrium of the $SI_{ks}$ game, which, then, is a dominance solvable game.
\label{theo.NASH_KS_as}
\end{theorem}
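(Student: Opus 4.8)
The plan is to prove the two assertions in sequence: first that $\Lambda^{n,*}$ is a dominant strategy for D, and then that $S_{YZ}^{n,*}$ is A's rational best response against it, after which dominance solvability and the characterization of the only rationalizable equilibrium follow immediately from the definitions recalled in \sect{sec.not}. The whole argument rests on the classical bounds from the method of types, namely $(n+1)^{-|\XX|}\,2^{-n\DD(P||P_X)} \le P_X(T(P)) \le 2^{-n\DD(P||P_X)}$ and $|\PP_n| \le (n+1)^{|\XX|}$, combined with the monotone dependence of the payoff on the acceptance region.

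For the dominant strategy I would start from the identity $P_{fp} = \sum_{P \notin \Lambda^n} P_X(T(P))$ and verify two facts. First, that $\Lambda^{n,*}\in\SS_D$: every excluded type satisfies $\DD(P||P_X) \ge \lambda - |\XX|\frac{\log(n+1)}{n}$, so summing the upper type bound over at most $(n+1)^{|\XX|}$ types yields $P_{fp}\le 2^{-\lambda n}$, the shift $|\XX|\frac{\log(n+1)}{n}$ being precisely calibrated to absorb the polynomial type count. Second, that every admissible region contains $\Lambda^{n,*}$: if some $\Lambda^n\in\SS_D$ omitted a type $P_0$ with $\DD(P_0||P_X) < \lambda - |\XX|\frac{\log(n+1)}{n}$, the lower type bound would force $P_{fp}\ge (n+1)^{-|\XX|}2^{-n\DD(P_0||P_X)} > 2^{-\lambda n}$, contradicting admissibility. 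Hence $\Lambda^{n,*}$ is the smallest element of $\SS_D$ under inclusion.

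Dominance then follows by monotonicity. For any fixed attacker strategy, $P_{fn} = \sum_{y^n:\, S^n_Z(\cdot\,;y^n)\in\Lambda^n} P_Y(y^n)$ is nondecreasing with respect to inclusion of $\Lambda^n$, since enlarging the acceptance region can only add sequences to the index set of the sum. Because $\Lambda^{n,*}\subseteq\Lambda^n$ for every admissible $\Lambda^n$, the region $\Lambda^{n,*}$ minimizes $P_{fn}$ against every attacker strategy simultaneously, so it is a dominant strategy for D and a rational D must play it.

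Conditioning on D playing $\Lambda^{n,*}$, A maximizes $P_{fn}$ by pushing the largest possible $P_Y$-mass inside $\Lambda^{n,*}$. Since $\Lambda^{n,*}$ is exactly the set of types with $\DD(\cdot\,||P_X)$ below the threshold, a sequence $y^n$ can be moved inside if and only if $\min_{S^n_{YZ}\in\A^n(L_{max},P_{y^n})}\DD(S^n_Z||P_X)$ falls below that threshold, and $S_{YZ}^{n,*}$ attains this minimum whenever it is feasible; thus $S_{YZ}^{n,*}$ captures every attackable sequence and is a best response, which makes the game dominance solvable with $(\Lambda^{n,*}, S_{YZ}^{n,*})$ as its only rationalizable equilibrium. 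I expect the main obstacle to lie in the type-counting of the preceding two steps — matching the polynomial corrections so that the containment $\Lambda^n\supseteq\Lambda^{n,*}$ is tight while $\Lambda^{n,*}$ itself stays admissible — together with the point that A's best response, although unique in payoff, need not be a unique transportation map, a discrepancy that must be reconciled with the claimed uniqueness of the rationalizable equilibrium (e.g.\ by arguing at the level of the realized outcome rather than the map itself).
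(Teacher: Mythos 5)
Your architecture is the right one and is essentially the argument of the cited reference \cite{BT13} (the present paper does not reprove this theorem, it imports it): show $\Lambda^{n,*}$ is the smallest admissible acceptance region, note that $P_{fn}$ is nondecreasing under inclusion of the acceptance region so that the smallest admissible region is optimal against \emph{every} attacker strategy (hence dominant), and then let A best-respond by minimizing $\DD(S^n_Z\|P_X)$ over $\A^n(L_{max},P_{y^n})$; dominance solvability and the rationalizable-equilibrium claim follow from the definitions recalled in Section~\ref{sec.not}. Your closing caveat is also apt: the minimizing transportation map need not be unique, and uniqueness of the equilibrium is to be read at the level of the induced outcome/payoff.

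There is, however, one concrete step that does not close as you wrote it: the admissibility of $\Lambda^{n,*}$. With the threshold $\lambda - |\XX|\frac{\log(n+1)}{n}$, every excluded type satisfies $\DD(P\|P_X)\ge \lambda - |\XX|\frac{\log(n+1)}{n}$, hence $P_X(T(P)) \le 2^{-n\DD(P\|P_X)} \le (n+1)^{|\XX|}\,2^{-n\lambda}$; summing over at most $(n+1)^{|\XX|}$ types yields $P_{fp}\le (n+1)^{2|\XX|}2^{-n\lambda}$, \emph{not} $2^{-n\lambda}$. The shift $-|\XX|\frac{\log(n+1)}{n}$ is calibrated for the necessity direction you handle next --- there the lower bound $(n+1)^{-|\XX|}2^{-n\DD(P_0\|P_X)} > 2^{-n\lambda}$ does close exactly --- but for sufficiency you would need each excluded type to satisfy $\DD(P\|P_X)\ge \lambda + |\XX|\frac{\log(n+1)}{n}$, i.e.\ a $+$ sign in the threshold. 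As set up, your two directions certify two different regions that sandwich the optimum and coincide only asymptotically; the claim that the shift is ``precisely calibrated to absorb the polynomial type count'' asserts an inequality that is false for finite $n$. You should either accept the polynomially weakened false-positive bound (harmless for every asymptotic statement in the paper, and effectively what the source reference lives with) or state explicitly which of the two thresholds you are proving dominant and admissible, rather than letting the mismatch pass silently.
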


\subsection{Payoff of the $SI_{ks}$ game at the equilibrium}

Given the optimal acceptance region $\Lambda^{n,*}$ and the optimum attacking strategy $S_{YZ}^{n,*}(y^n)$, we can introduce the indistinguishability region $\Gamma^n(P_X, \lambda, L_{max})$ as follows:
\begin{align}
\label{eq.indistinguib}
    \Gamma^n(P_X, \lambda, L_{max}) &= \\ \nonumber
    \{P \in \PP_n & : \exists ~ S^n_{YZ} \in \A^n(L_{max}, P) \text{ s.t. } S^n_{Z} \in \Lambda^{n,*}\}.
\end{align}
The indistinguishability region defines all the type classes (with denominator $n$) whose sequences can be moved within  $\Lambda^{n,*}$ by the attacker.
The problem with the above analysis is that it applies only to types with denominator $n$ and hence can not be used to decide whether the sequences generated by two generic sources (not necessarily belonging to $\PP_n$) can be distinguished. In order to answer this question, we can rely on the density of rational numbers in $\mathds{R}$, and let $n$ tend to infinity. In this way we can define the asymptotic counterpart of $\Gamma^n$, specifying whether two sources can eventually be distinguished for increasing values of $n$ \cite{BT13}:
\begin{align}
\label{eq.indistinguib_limit}
    \Gamma(P_X, \lambda, L_{max}) &= \\ \nonumber
    \{P \in \PP : & ~ \exists ~ S_{YZ} \in \A(L_{max}, P) \text{ s.t. } S_{Z} \in \Lambda^*(P_X, \lambda)\},
\end{align}
where
\begin{equation}
\label{eq.lambda_limit}
    \Lambda^*(P_X,\lambda) = \{P \in \PP : \DD(P|| P_X) \le \lambda\},
\end{equation}
and where the definitions of $S_{YZ}(i,j)$, $S_Z(j)$ and $\A(L_{max}, P)$ are obtained immediately from those of $S^n_{YZ}(i,j)$, $S^n_Z(j)$ and  $\A^n(L_{max}, P)$, by relaxing the requirement that $S_{YZ}(i,j)$, $S_Z(j)$ and $P(i)$ are rational numbers with denominator $n$. More precisely,  we can state the following theorem:
\begin{theorem}
\label{theorem2}
For the $SI_{ks}$ game, the error exponent of the false negative error probability at the equilibrium is given by\footnote{Here and in the rest of the paper, the use of the minimum instead of the infimum is justified by the compactness of $\Gamma(P_X, \lambda, L_{max})$ and other similar sets defined in the following.}:
\begin{equation}
    \varepsilon = \min_{P \in \Gamma(P_X, \lambda, L_{max})} \DD(P||P_Y),
\label{eq.asymptotic_theorem}
\end{equation}
leading to the following cases:
\begin{enumerate}
    \item{$\varepsilon= 0$, if $P_Y \in \Gamma(P_X, \lambda, L_{max})$;}
    \item{${\displaystyle \varepsilon \ne 0}$, if $P_Y \notin \Gamma(P_X, \lambda, L_{max})$}.
\end{enumerate}
\label{theo.payoff_KS_as}
\end{theorem}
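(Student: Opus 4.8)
The plan is to evaluate the false negative probability at the equilibrium in closed form via the method of types and then read off its exponent. By Theorem~\ref{theo.NASH_KS_as}, at the equilibrium D plays $\Lambda^{n,*}$ and A plays the divergence-minimizing map $S_{YZ}^{n,*}$, so that $P_{fn}=\sum_{y^n:\,S_Z^{n,*}(j;y^n)\in\Lambda^{n,*}}P_Y(y^n)$. First I would observe that the optimal attack moves $y^n$ inside $\Lambda^{n,*}$ if and only if the type $P_{y^n}$ lies in the indistinguishability region $\Gamma^n(P_X,\lambda,L_{max})$: if $P_{y^n}\in\Gamma^n$ some admissible map reaches $\Lambda^{n,*}$ by definition, and since $S_{YZ}^{n,*}$ minimizes $\DD(S_Z^n||P_X)$ among admissible maps it reaches it too, whereas if $P_{y^n}\notin\Gamma^n$ no admissible map can. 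Since success depends only on the type, grouping the sequences by type gives
\begin{equation}
\label{eq.pfn_as_types}
P_{fn}=\sum_{P\in\Gamma^n\cap\PP_n}P_Y(T(P)).
\end{equation}

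Next I would bound each term with the standard type-class estimates (cf.\ \cite{CandT}), $(n+1)^{-|\XX|}2^{-n\DD(P||P_Y)}\le P_Y(T(P))\le 2^{-n\DD(P||P_Y)}$, and use $|\PP_n|\le(n+1)^{|\XX|}$. Substituting in \eqref{eq.pfn_as_types} sandwiches $P_{fn}$ between $(n+1)^{-|\XX|}$ and $(n+1)^{|\XX|}$ times $2^{-n\min_{P\in\Gamma^n}\DD(P||P_Y)}$. Taking $-\tfrac{1}{n}\log(\cdot)$ and letting $n\to\infty$, the $\tfrac{|\XX|\log(n+1)}{n}$ prefactors vanish and we obtain
\begin{equation}
\label{eq.eps_limit}
\varepsilon=\lim_{n\to\infty}\;\min_{P\in\Gamma^n}\DD(P||P_Y).
\end{equation}

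The crux is to identify this limit with $\min_{P\in\Gamma}\DD(P||P_Y)$, i.e.\ to show that the discrete regions $\Gamma^n$ -- built from rational types of denominator $n$ and using the shifted threshold $\lambda-|\XX|\log(n+1)/n$ inside $\Lambda^{n,*}$ -- converge to the continuous region $\Gamma$ of \eqref{eq.indistinguib_limit}. I would prove this by establishing the two set inclusions $\Gamma\subseteq\liminf_n\Gamma^n$ and $\limsup_n\Gamma^n\subseteq\Gamma$: the vanishing correction term drives $\Lambda^{n,*}$ to $\Lambda^*(P_X,\lambda)$, while the density of the rationals of denominator $n$ in $\mathds{R}$ lets any $P\in\Gamma$, together with a witnessing admissible map $S_{YZ}$, be approximated by types and admissible maps in $\Gamma^n$ for $n$ large; continuity of $\DD(\cdot||P_Y)$, of the distortion functional, and of the constraints defining $\A$ and $\Lambda^*$ then carries the minima across. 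I expect this approximation step to be the main obstacle, since one must round the type, the transportation map and its output marginal $S_Z$ simultaneously while still honoring the budget $L_{max}$ and the divergence threshold; the compactness of $\Gamma$ recalled in the theorem's footnote is what makes the limiting minimum well defined and attained.

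The two cases then follow immediately. If $P_Y\in\Gamma$, nonnegativity of the divergence and $\DD(P_Y||P_Y)=0$ force the minimum in \eqref{eq.asymptotic_theorem} to be $0$, hence $\varepsilon=0$. If $P_Y\notin\Gamma$, compactness of $\Gamma$ ensures the minimum is attained at some $P^*\in\Gamma$, and since $\DD(P||P_Y)=0$ only at $P=P_Y\notin\Gamma$ we have $P^*\neq P_Y$ and therefore $\varepsilon=\DD(P^*||P_Y)>0$.
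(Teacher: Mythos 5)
Your proposal is correct and takes essentially the route the paper intends: Theorem~\ref{theorem2} is imported from \cite{BT13} without a reproduced proof, and both the surrounding discussion (the appeal to the density of rational types, and the Sanov-style argument invoked again in Section~\ref{sec.maxdiff} for the $L_\infty$ variant) and the cited reference rest on exactly your decomposition --- a method-of-types sandwich of $P_{fn}$ over $\Gamma^n$ followed by an approximation lemma showing that $\Gamma^n$ becomes dense in $\Gamma(P_X,\lambda,L_{max})$. You also correctly isolate the one step that needs a genuine argument, namely the simultaneous rounding of the type, the transportation map and its output marginal while preserving the budget $L_{max}$ and the shrinking divergence threshold; that is precisely the content of the lemma in \cite{BT13} to which the paper defers.
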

Given two pmf's $P_X$ and $P_Y$, a maximum distortion $L_{max}$ and the desired false positive error exponent $\lambda$, Theorem \ref{theo.payoff_KS_as} permits to understand whether D may ever succeed to make the false negative error probability vanishingly small and thus {\em win} the game. Then, $\Gamma (P_X, \lambda, L_{max})$ can be interpreted as the region with the sources that cannot be reliably distinguished from $P_X$ guaranteeing a false positive error exponent at least equal to $\lambda$ in the presence of an adversary with allowed distortion $L_{max}$, where by {\em reliably distinguished} we mean distinguished in such a way to grant a strictly positive error exponent for $P_{fn}$. A geometric interpretation of Theorem \ref{theorem2} is given in Figure \ref{fig.theo2}.

\begin{figure}[t!]
\centering \includegraphics[width = 0.65\columnwidth]{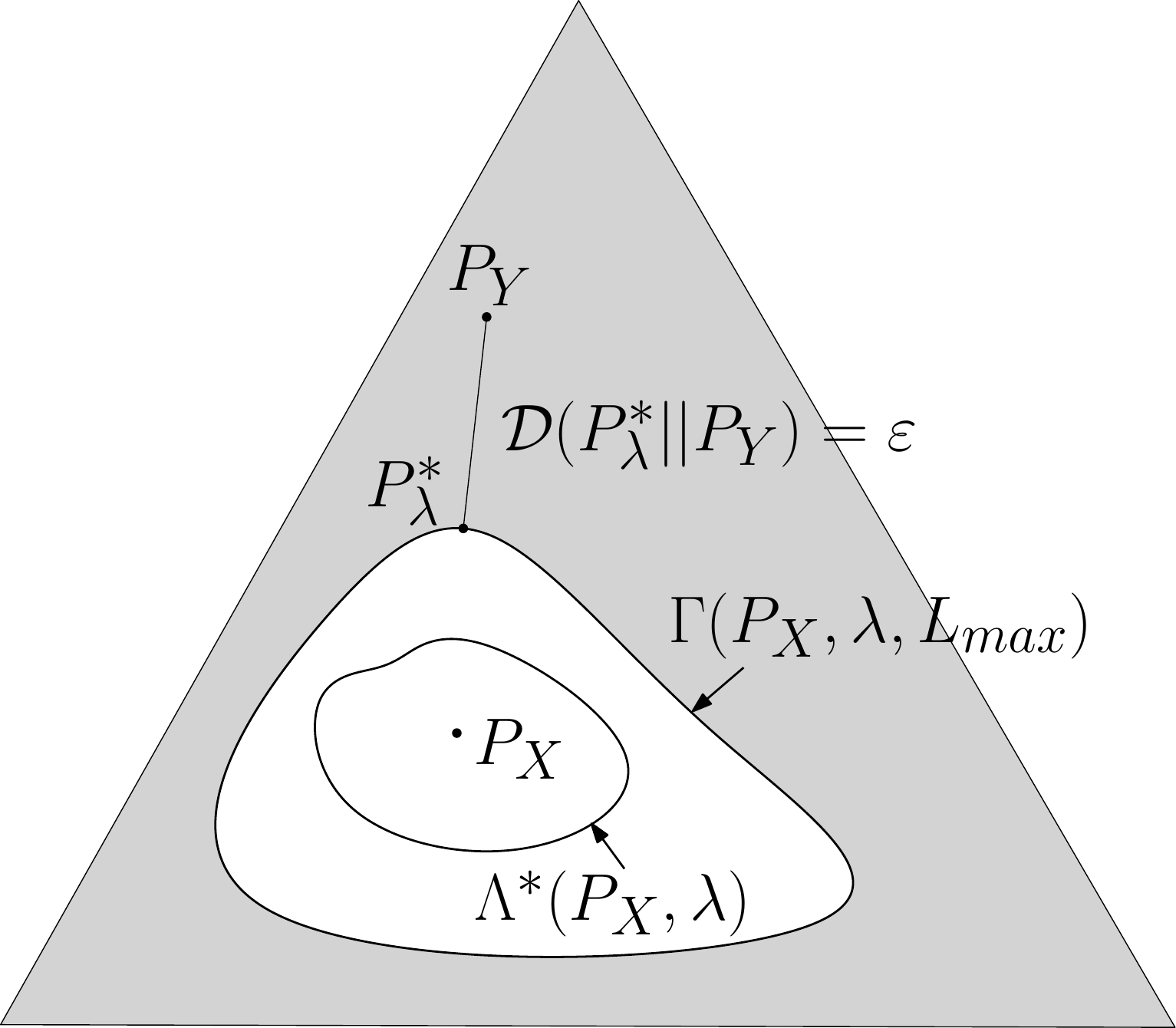}
\caption{Geometric interpretation of $\Gamma(P_X, \lambda, L_{max})$ and $\Lambda^*(P_X,\lambda)$ by the light of Theorem \ref{theorem2}. }
\label{fig.theo2}
\end{figure}

\section{The security margin}
\label{sec.SM}

In this section, we use the optimal transport interpretation of the attacker's strategy to introduce a measure of source distinguishability in the set-up defined by the $SI_{ks}$ game.

\subsection{Characterization of the indistinguishability region using Optimal Transportation}
\label{sec.SM_opt}

To start with, we find it convenient to rephrase the results described in the previous section as an optimal transport problem \cite{Vill09}.

Let $P$ and $Q$ be two pmf's defined over the same finite alphabet, and let $c(i,j)$ be the cost of transporting the $i$-th symbol into the $j$-th one. In one of its instances, optimal transport theory looks for the transportation map
that transforms $P$ into $Q$ by minimizing the average cost of the transport. By using the notation introduced in the previous section, this corresponds to solving the following minimization problem:
\begin{equation}
\label{eq.transport_problem}
   \min_{S_{YZ} : S_{Y} = P, S_{Z} = Q} ~ \sum_{i,j} S_{YZ}(i,j) c(i,j).
\end{equation}
A nice interpretation of the problem defined by equation (\ref{eq.transport_problem}) is obtained by interpreting the pmf's $P$ and $Q$ as two different ways of piling up a certain amount of earth, and $c(i,j)$ as the cost necessary to move a unitary amount of earth from position $i$ to position $j$. In this case, the minimum cost achieved in (\ref{eq.transport_problem}) can be seen as the minimum effort required to turn one pile into the other. Due to such a viewpoint, in computer vision applications, the minimum in equation (\ref{eq.transport_problem}) is usually known as Earth Mover Distance ({\em EMD}) between $P$ and $Q$, \cite{RTG00}. However, while the definition of the {\em EMD} given in \cite{RTG00} refers in general to signatures (non-normalized distributions with unequal masses), here the pilings of earth $P$ and $Q$ are probability mass functions. In this case, when $c(i,j) = d(i,j)^p$ for some distance measure $d$ (with $p \ge 1$), the {\em EMD} has a more general statistical meaning. Given two random variables with probability distributions $P_X$ and $P_Y$, the {\em EMD} between $P_X$ and $P_Y$ corresponds to the minimum expected $p$-th power distance between the random variables $X$ and $Y$ taken over all joint probability distributions $P_{XY}$  with marginal distributions respectively equal to $P_X$ and $P_Y$:
%
\begin{equation}
\label{Wasserstein_dist}
EMD_{d^p}(P_X,P_Y) = \min_{P_{XY}:\tiny{\substack{\sum_y P_{XY} = P_X \\ \sum_x P_{XY} = P_Y}}} E_{XY} [d(X,Y)^p].
\end{equation}
In transport theory terminology, expression (\ref{Wasserstein_dist}) is the $p$-th power of the Wasserstein distance \cite{rachev1998mass}, \cite{Vill09} (or the Monge-Kantorovich metric of order $p$ \cite{Vill03}, \cite{rachev1985monge}).  In particular, when $c(i,j) = |i-j|^2$ (i.e. $d(i,j) = |i-j|$ and $p=2$) the earth mover distance is equivalent to the squared Mallows distance between $P_X$ and $P_Y$ \cite{EMDMallows}, that is
\begin{equation}
\label{Mallow_dist}
EMD_{L_2^2}(P_X,P_Y) = \min_{P_{XY}:\tiny{\substack{\sum_y P_{XY} = P_X \\ \sum_x P_{XY} = P_Y}}} E_{XY} [|X - Y|^2].
\end{equation}
In the following, we will continue to refer to (\ref{eq.transport_problem}) as {\em EMD(P,Q)}. We also observe that even if we introduced the {\em EMD} by considering finite-alphabet sources, there is no need to restrict the definition in \eqref{Wasserstein_dist} and \eqref{Mallow_dist} to discrete random variables. In fact, in the second part of the paper, we will extend our analysis and use the {\em EMD} to measure the distinguishability of continuous sources.

Optimal transport theory permits us to rewrite the indistinguishability region in a more compact and easier-to-interpret way. In fact, it is immediate to see that equation (\ref{eq.indistinguib_limit}) can be rewritten as:
\begin{align}
\label{eq.indistinguib_EMD}
    \Gamma(P_X, \lambda, L_{max}) & =  \\ \nonumber
    \{P \in \PP : & ~\exists ~ Q \in {\Lambda^*}(P_X, \lambda) ~ \text{s.t} ~ \text{{\em EMD}}(P,Q) \le L_{max}  \},
\end{align}
where in the definition of the {\em EMD} $c(i,j)$ corresponds to the distortion metric  used to constraint the strategies available to the attacker.


\subsection{Security Margin definition}
\label{subsec.SMdef}

We now study the behavior of $\Gamma(P_X, \lambda, L_{max})$ when $\lambda \rarrow 0$. Doing so will allow us to investigate whether two sources $X$ and $Y$ are ultimately distinguishable in the setting defined by the $SI_{ks}$ game.

The rationale behind our analysis derives directly from equations (\ref{eq.indistinguib_limit}) and (\ref{eq.lambda_limit}). In fact, it is easy to see that decreasing $\lambda$ in the definition of $\SS_D$ leads to a more favorable game for the defender, since he can adopt a smaller acceptance region and obtain a larger payoff. Stated in another way, from D's perspective, evaluating the behavior of the game for $\lambda \rarrow 0$ corresponds to exploring the best achievable false negative error exponent, when $P_{fp}$ tends to 0 exponentially fast.

More formally, we start by proving the following property.
\begin{property}
\label{prop.inclusion}
For any two values $\lambda_1$ and $\lambda_2$ such that $\lambda_2 < \lambda_1$, $\Gamma(P_X, \lambda_2, L_{max}) \subseteq \Gamma(P_X, \lambda_1, L_{max})$.
\end{property}
\begin{proof}
The property follows immediately from equation (\ref{eq.indistinguib_EMD}) by observing that $\Gamma(P_X, \lambda, L_{max})$ depends on $\lambda$  only through the acceptance region ${\Lambda^*}(P_X, \lambda)$, for which we obviously have ${\Lambda^*}(P_X, \lambda_2) \subseteq {\Lambda^*}(P_X, \lambda_1)$ whenever $\lambda_2 < \lambda_1$.
\end{proof}
Thanks to Property \ref{prop.inclusion}, we can compute the limit of the false negative error exponent when $\lambda$ tends to zero, as summarized in the following theorem (somewhat resembling Stein's Lemma \cite{CandT}).
\begin{theorem}
\label{theorem_EMD}
Given two sources $X \sim P_X$ and $Y \sim P_Y$ and a maximum average per-letter distortion $L_{max}$ (defined according to an additive distortion measure), let us adopt the following definition:
\begin{equation}
\Gamma(P_X,L_{max}) = \{P \in \PP: \text{\em EMD}(P,P_X) \le L_{max}\};
\label{Gamma_X}
\end{equation}
then the maximum achievable false negative error exponent $\varepsilon$ for the $SI_{ks}$ game is
\begin{equation}
\lim_{\lambda \rarrow 0} \lim_{n \rarrow \infty} - \frac{1}{n} \log P_{fn} = \min_{P \in \Gamma(P_X,L_{max})} \DD(P || P_Y).
\label{best_e_e}
\end{equation}
\end{theorem}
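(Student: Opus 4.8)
The plan is to treat the inner limit as already resolved. By Theorem~\ref{theorem2}, for every fixed $\lambda>0$ one has $\lim_{n\rarrow\infty}-\frac1n\log P_{fn}=\varepsilon(\lambda)$, where I abbreviate $\varepsilon(\lambda)=\min_{P\in\Gamma(P_X,\lambda,L_{max})}\DD(P\|P_Y)$ and where $\Gamma(P_X,\lambda,L_{max})$ admits the EMD description~(\ref{eq.indistinguib_EMD}). The statement~(\ref{best_e_e}) thus reduces to evaluating $\lim_{\lambda\rarrow 0}\varepsilon(\lambda)$ and showing it equals $\min_{P\in\Gamma(P_X,L_{max})}\DD(P\|P_Y)$. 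First I would invoke Property~\ref{prop.inclusion}: since $\Gamma(P_X,\lambda,L_{max})$ shrinks as $\lambda$ decreases, a minimum taken over a smaller set can only grow, so $\varepsilon(\lambda)$ is non-increasing in $\lambda$; consequently the limit as $\lambda\rarrow 0$ exists and equals $\sup_{\lambda>0}\varepsilon(\lambda)$.

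For the bound $\lim_{\lambda\rarrow 0}\varepsilon(\lambda)\le\min_{P\in\Gamma(P_X,L_{max})}\DD(P\|P_Y)$, I would exhibit the inclusion $\Gamma(P_X,L_{max})\subseteq\Gamma(P_X,\lambda,L_{max})$, valid for every $\lambda>0$. Indeed, if $EMD(P,P_X)\le L_{max}$, then choosing $Q=P_X$ — which lies in $\Lambda^*(P_X,\lambda)$ because $\DD(P_X\|P_X)=0\le\lambda$ — witnesses membership of $P$ in the right-hand side of~(\ref{eq.indistinguib_EMD}). Minimizing $\DD(\cdot\|P_Y)$ over the larger set $\Gamma(P_X,\lambda,L_{max})$ therefore yields a value no larger than the minimum over $\Gamma(P_X,L_{max})$, and letting $\lambda\rarrow 0$ preserves the inequality.

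The reverse inequality is the crux. For each small $\lambda$ let $P_\lambda$ attain $\varepsilon(\lambda)$ (the minimum exists by the compactness noted in the footnote to Theorem~\ref{theorem2}), and let $Q_\lambda\in\Lambda^*(P_X,\lambda)$ be the associated distribution with $EMD(P_\lambda,Q_\lambda)\le L_{max}$. Since the probability simplex over $\XX$ is compact, I would extract a sequence $\lambda_k\rarrow 0$ along which $P_{\lambda_k}\rarrow P^*$. The key geometric fact is $P^*\in\Gamma(P_X,L_{max})$: from $\DD(Q_{\lambda_k}\|P_X)\le\lambda_k\rarrow 0$ and Pinsker's inequality, $Q_{\lambda_k}\rarrow P_X$ in total variation, and since the EMD is jointly continuous on the simplex, $EMD(P^*,P_X)=\lim_k EMD(P_{\lambda_k},Q_{\lambda_k})\le L_{max}$. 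Invoking lower semicontinuity of the KL divergence then gives $\min_{P\in\Gamma(P_X,L_{max})}\DD(P\|P_Y)\le\DD(P^*\|P_Y)\le\liminf_k\DD(P_{\lambda_k}\|P_Y)=\lim_{\lambda\rarrow 0}\varepsilon(\lambda)$, closing the chain and proving~(\ref{best_e_e}).

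I expect this third step to be the main obstacle, for two reasons. First, it amounts to interchanging the $\lambda\rarrow 0$ limit with the minimization, which is exactly the content of the nested-compact-sets argument and rests on continuity of the EMD functional together with lower semicontinuity of $\DD(\cdot\|P_Y)$; some care is needed when $P_Y$ lacks full support, where $\DD(\cdot\|P_Y)$ may be discontinuous and jump to $+\infty$. Second, the argument must be carried out in the continuous, limiting regime of~(\ref{eq.indistinguib_limit}) and~(\ref{eq.lambda_limit}) rather than on the types with denominator $n$; I would rely, as in~\cite{BT13}, on the density of $\PP_n$ in $\PP$ to pass from the finite-$n$ sets to their asymptotic counterparts before taking $\lambda\rarrow 0$.
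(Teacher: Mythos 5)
Your proposal is correct. The reduction to $\lim_{\lambda\to 0}\varepsilon(\lambda)$ via Theorem~\ref{theorem2}, the monotonicity from Property~\ref{prop.inclusion}, and the inclusion $\Gamma(P_X,L_{max})\subseteq\Gamma(P_X,\lambda,L_{max})$ witnessed by $Q=P_X$ all coincide with the paper's argument. Where you genuinely diverge is the reverse inequality. The paper proves a uniform set-approximation statement (Lemma~\ref{lemma_A}): for every $\tau>0$ there is a $\lambda$ such that \emph{all} of $\Gamma(P_X,\lambda,L_{max})$ lies within distance $\tau$ of $\Gamma(P_X,L_{max})$; that lemma is established by a geometric construction resting on the convexity and continuity of the \emph{EMD} (Property~\ref{property_EMD}), and is then applied to the minimizer $P^*_\lambda$ together with continuity of $\DD$. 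You instead track only the sequence of minimizers: extract $P_{\lambda_k}\to P^*$ by compactness of the simplex, force $Q_{\lambda_k}\to P_X$ via Pinsker, deduce $P^*\in\Gamma(P_X,L_{max})$ from continuity of the \emph{EMD}, and close with lower semicontinuity of $\DD(\cdot\|P_Y)$. Your route is more economical, since it dispenses with Lemma~\ref{lemma_A} and with convexity of the \emph{EMD} altogether, and invoking lower semicontinuity rather than continuity of $\DD(\cdot\|P_Y)$ is the more careful choice when $P_Y$ lacks full support, a point the paper glosses over. What the paper's heavier lemma buys is reusability: the same uniform set-convergence statement is extended essentially unchanged to the $SI_{tr}$ game (Theorem~\ref{theorem_EMD_tr}) and to the $L_\infty$ distortion (Theorem~\ref{theorem_EMD_Linfty}, via Appendix~\ref{app.lemma_A_ext}), whereas your subsequence argument would have to be rerun in each of those settings. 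One small nit: $EMD(P^*,P_X)=\lim_k EMD(P_{\lambda_k},Q_{\lambda_k})$ should read $EMD(P^*,P_X)\le\limsup_k EMD(P_{\lambda_k},Q_{\lambda_k})\le L_{max}$, since the limit need not exist a priori; this does not affect the conclusion.
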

\begin{proof}
The innermost limit in (\ref{best_e_e}) defines the error exponent for a fixed $\lambda$, say it $\varepsilon(\lambda)$. Thanks to equation (\ref{eq.asymptotic_theorem}), we know that
\begin{equation}
\label{eq.epslambda}
    \lim_{n \rarrow \infty} - \frac{1}{n} \log P_{fn} = \varepsilon(\lambda) = \min_{P \in \Gamma(P_X, \lambda, L_{max})} \DD(P||P_Y).
\end{equation}
Then, according to  Property \ref{prop.inclusion}, the sequence $\varepsilon(\lambda)$ is monotonically non decreasing as $\lambda$ decreases. In addition, since $\Gamma(P_X,L_{max}) \subseteq \Gamma(P_X, \lambda, L_{max})$ $\forall \lambda$, for any  $\lambda > 0$, we have:
\begin{equation}
\label{e.e.lambda_upper}
\varepsilon(\lambda) \le  \min_{P \in \Gamma(P_X,L_{max})} \DD(P || P_Y).
\end{equation}
Being $\varepsilon(\lambda)$ bounded from above and non-decreasing, the limit for $\lambda \rarrow 0$ exists and is finite. We must now prove that the limit is indeed equal to $\min_{P \in \Gamma(P_X,L_{max})} \DD(P || P_Y)$.
Let $P^*_0$ be the point achieving the minimum in (\ref{best_e_e}) and $P^*_{\lambda}$ the point achieving the minimum on the set $\Gamma(P_X, \lambda, L_{max})$, i.e. the point achieving the minimum in equation (\ref{eq.asymptotic_theorem}) (see Figure \ref{fig.theo2} for a pictorial representation of $P^*_{\lambda}$). Due to Lemma \ref{lemma_A} (Appendix \ref{app.lemma_A}),
for any arbitrarily small $\tau$, we can choose a small enough $\lambda$ such that, for any $P$ in $\Gamma(P_X, \lambda, L_{max})$, a pmf $P'$ in $\Gamma(P_X,L_{max})$ exists whose distance from $P$ is lower than $\tau$. By taking $P = P^*_{\lambda}$ and exploiting the continuity of the $\DD$ function, we have
\begin{equation}
\DD(P'||P_Y) \le \min_{P \in \Gamma(P_X, \lambda, L_{max})} \DD(P || P_Y) + \delta(\tau),
\label{relation_lambda_e_0}
\end{equation}
for some $P' \in \Gamma(P_X,L_{max})$ and some value $\delta(\tau)$ such that $\delta(\tau) \rarrow 0$ as $\tau \rarrow 0$. A fortiori, relation (\ref{relation_lambda_e_0}) holds for $P' = P^*_0$ and then we can write
\begin{align}
\label{divergence_continuity_Gamma}
\varepsilon(\lambda) & = \min_{P \in \Gamma(P_X, \lambda, L_{max})} \DD(P || P_Y) \\ \nonumber
& \ge \min_{P \in \Gamma(P_X,L_{max})} \DD(P || P_Y) - \delta(\tau).
\end{align}
where $\delta(\tau)$ can be made arbitrarily small by decreasing $\lambda$.
Equation (\ref{divergence_continuity_Gamma}), together with equation (\ref{e.e.lambda_upper}), shows that we can get arbitrarily close to $\min_{P \in \Gamma(P_X,L_{max})} \DD(P || P_Y)$, by making $\lambda$ small enough, hence proving that $\min_{P \in \Gamma(P_X,L_{max})} \DD(P || P_Y)$ is the limit of the sequence $\varepsilon(\lambda)$ as $\lambda \rarrow 0$.
\end{proof}

Figure \ref{fig.theo3} gives a geometric interpretation of Theorem \ref{theorem_EMD}. The figure is obtained from Figure \ref{fig.theo2} by observing that when $\lambda \rarrow 0$ the optimum acceptance region collapses into the single pmf $P_X$, i.e., $\Lambda^* = \{P_X\}$.

\begin{figure}[t!]
\centering \includegraphics[width = 0.65\columnwidth]{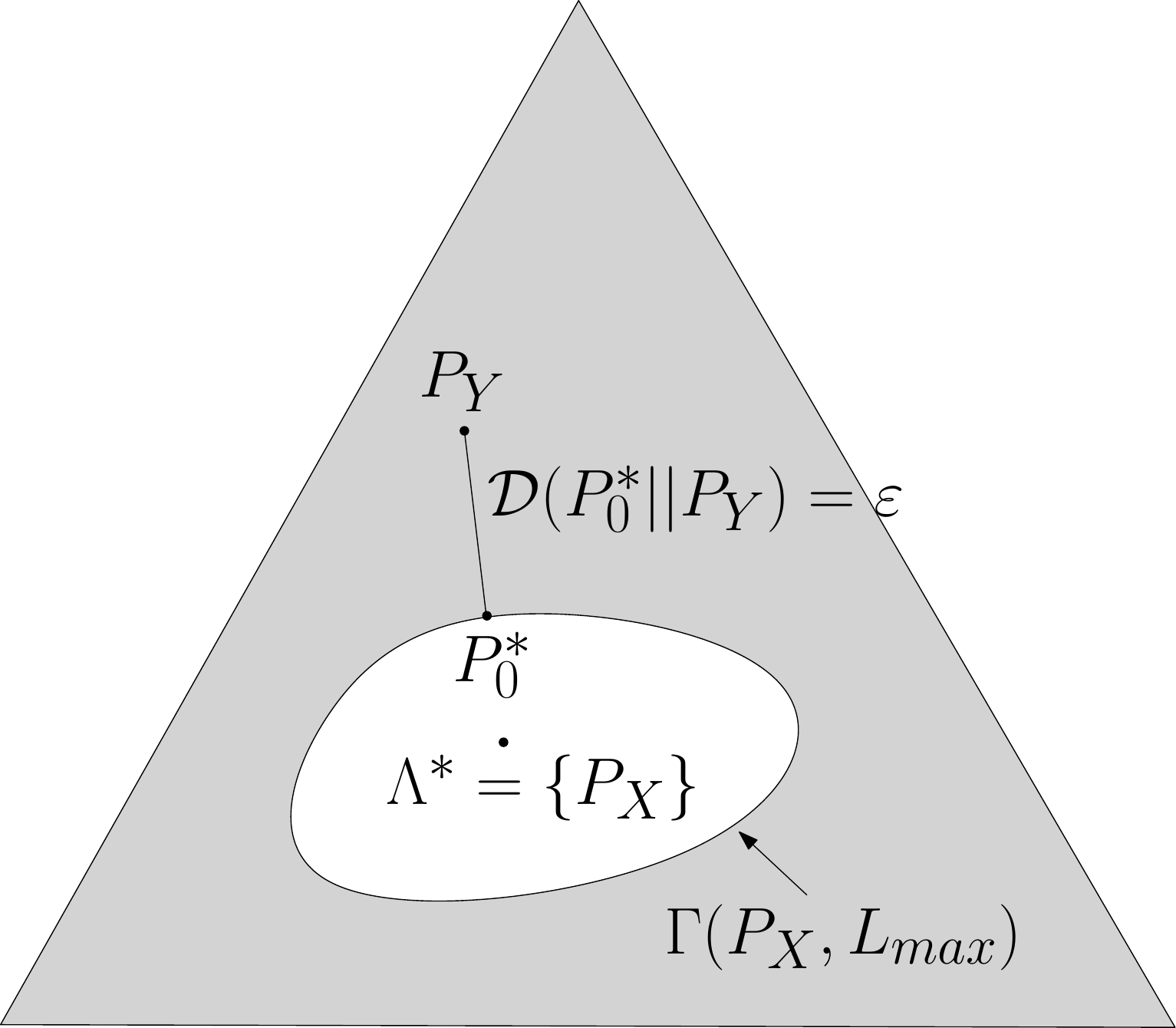}
\caption{Geometric interpretation of $\Gamma(P_X, L_{max})$ and $P^*_0$ by the light of Theorem \ref{theorem_EMD}. }
\label{fig.theo3}
\end{figure}

By the light of Theorem \ref{theorem_EMD}, $\Gamma(P_X, L_{max})$ is the smallest indistinguishability region for the $SI_{ks}$ game. Moreover, from equation (\ref{Gamma_X}), we see that the distinguishability of two pmf's (in the $SI_{ks}$ setting) ultimately depends on their {\em EMD}. In fact, if {\em EMD}$(P_Y,P_X) > L_{max}$, the defender is able to distinguish $X$ from $Y$ by adopting a sufficiently small $\lambda$. On the contrary, if {\em EMD}$(P_Y,P_X) \le L_{max}$, there is no positive value of $\lambda$ for which the sequences emitted by the two sources can be asymptotically distinguished.

By adopting a different perspective, given two sources $X$ and $Y$, one may ask which is the maximum attacking distortion for which D can distinguish $X$ and $Y$ despite the presence of the adversary. The answer to this question follows immediately from Theorem \ref{theorem_EMD} and leads naturally to the following definition.
\begin{definition}[Security Margin]
\label{def.SMdef}
Let $X \sim P_X$ and $Y \sim P_Y$ be two discrete memoryless sources. The maximum average per-letter distortion for which the two sources can be reliably distinguished in the $SI_{ks}$ setting is called Security Margin and is given by
\begin{equation}
\SS \MM(P_Y, P_X) = \text{\em EMD}(P_Y,P_X).
\label{security_margin_discrete}
\end{equation}
\end{definition}
\noindent Interestingly, the {\em EMD} is a symmetric function of $P_X$ and $P_Y$ \cite{RTG00}, and hence the security margin does not depend on the role of $X$ and $Y$ in the test, i.e. $\SS\MM(P_X, P_Y) = \SS\MM(P_Y,P_X)$. The security margin is a powerful measure summarizing in a single quantity how securely two sources can be distinguished (in the $SI_{ks}$ setting).

It is worth remarking that the security margin between two sources pertains to the {\em security} of the hypothesis test behind the source identification problem and not to its {\em robustness}, since it is measured at the equilibrium of the game, i.e. by assuming that both the players of the game make optimal choices. To better exemplify the above concept, let us consider the simple case of two binary sources. Specifically, let $X$ and $Y$ be two Bernoulli sources with parameters $p = P_X(1)$ and $q=P_Y(1)$ respectively. Let also assume that the distortion constraint is expressed in terms of the Hamming distance between the sequences, that is $d(i,j) = 0$ when $i=j$ and $1$ otherwise. Without loss of generality let $p > q$. The distortion associated to a transportation map $S_{XY}$ can be written as:
\begin{equation}
\sum_{i,j} S_{YX}(i,j) d(i,j) = S_{YX}(0,1) + S_{YX}(1,0).
\end{equation}
Since $p > q$, it is easy to conclude that the minimum of the above expression is obtained when $S_{YX}(1,0) = 0$ (intuitively, if the source $X$ outputs more 1's than $Y$, it does not make any sense to turn the 1's emitted by $Y$ into 0's). As a consequence, to satisfy the constraint $S_X(1) = p$ we must let $ S_{YX}(0,1) = p - q$, yielding $\SS \MM(P_Y,P_X) = p-q$, or more generally $|p-q|$. We can conclude that if the attacker is allowed to introduce an average Hamming distortion larger or equal than $|p-q|$, then there is no way for the defender to distinguish between the two sources. This is not the case if the output of the source $Y$ passes through a binary symmetric channel with crossover probability equal to $|p-q|$, since the output of the channel will still be distinguishable from the sequences emitted by $X$. Consider, for example, a simple case in which $q = 1/2$ and $p >1/2$. Regardless of the crossover probability, the output of the channel will still be a binary source with equiprobable symbols, which is distinguishable from $X$ given that $p > 1/2$. In other words, in the set up defined by the $SI_{ks}$ game, the two sources can not be distinguished securely in the presence of an attacker introducing a distortion equal to $|p-q|$, while they can be distinguished even if the output of the source $Y$ passes through a noisy channel introducing the same average distortion introduced by the attacker.

\section{Extension to source identification with training data}
\label{sec.SI_tr}

In this section we extend the previous analysis to the case of source identification with training data ($SI_{tr}$), in order to provide a measure of source distinguishability, in the more general setup studed in \cite{BTtit}. In such a scenario, the two sources $X$ and $Y$ are not completely known to D and A, so they must base their actions on the knowledge of a training sequence drawn from $X$ (the source under the null hypothesis). This is a very interesting scenario bringing the analysis closer to real applications, in which a precise statistical model of the to-be-distinguished sources is usually not available. In \cite{BTtit}, it is proven that the source identification game with training data is more favorable to the attacker than the $SI_{ks}$ game. Then one could argue that in the $SI_{tr}$ setup the security margin between the two sources is smaller, implying that a lower distortion is sufficient to the attacker to make the sources undistinguishable. The remarkable result that we will prove in this section is that this is not the case, hence showing that the ultimate distinguishability of two sources is the same for the two games.

\subsection{The source identification game with training data ($SI_{tr}$)}

In order to present our analysis in a self-contained way, in this section we summarize the main results proven in \cite{BTtit}. Once again, we will do so by adopting a transportation theory perspective for the definition of the attacker's optimum strategy.

Let us start by giving a rigorous definition of the source identification game with training data.

{\em Defender's strategies.}  In the $SI_{tr}$ game the defender must decide whether a test sequence $x^n$ has been generated by a source $X$ with unknown pmf by relying on the knowledge of an $N$-sample training sequence $t^N_D$ drawn from $X$. This is equivalent to deciding whether to accept or not the hypothesis  $H_0$ that the test and the training sequences have been generated by the same source. In this framework, the acceptance region $\Lambda$ is defined as the set with all the pairs of sequences $(x^n,t^N_D)$ that D classifies as being generated by the same source. Once again, we limit the action of D to a first order analysis of $x^n$ and $t_D^N$. This is equivalent to require that the acceptance region for hypothesis $H_0$ is a union of pairs of type classes, or equivalently, pairs of types $(P,Q)$, where $P \in \mathcal{P}_n$ and $Q \in \mathcal{P}_N$.
As for the $SI_{ks}$ case, the defender must ensure that the asymptotic false positive error probability tends to zero exponentially fast at least with a certain decay rate, however since $P_X$ is not known, the constraint must be satisfied in a worst case sense, i.e. for all possible choices of $P_X$. More specifically, the space of strategies of D is given by:
\begin{equation}
    \SS_{D} = \{ \Lambda^n_{tr} \subset \mathcal{P}_n\times\mathcal{P}_N: \max_{P_X \in \mathcal{P}} P_{fp} \le 2^{-\lambda n}\},
\label{eq.SD_TR_as}
\end{equation}
where $\mathcal{P}$ is the class of discrete memoryless sources.

{\em Attacker's strategies.} Given a sequence $y^n$ drawn from a source $Y \ne X$, the goal of A is to transform $y^n$ into a sequence $z^n$ belonging to the acceptance region chosen by D while respecting a distortion constraint. Likewise the defender, all the information that the attacker has about $X$ is a $K$-long training sequence $t_A^K$. By using the same transportation theoretic formalism used in the previous section, the set of strategies of the attacker consists of all the possible ways of choosing an admissible transportation map to transform $y^n$ into $z^n$.
\begin{equation}
    \SS_{A} = \{ S^n_{YZ}(i,j;y^n,t_A^K) : S^n_{YZ}(i,j)  \in \A^n(L_{max}, P_{y^n}) \},
\label{eq.SA_TR_as}
\end{equation}
where we have explicitly indicated that the choice of the transportation map now depends also on $t_A^K$, and where the set of admissible maps is defined as in the $SI_{ks}$ case.

Depending on the relationship between $t_A^K$ and $t_D^N$, several versions of the $SI_{tr}$ game can be defined. Here we focus on the simplest case of equal training sequences, i.e. we assume $K=N$ and $t_A^N = t_D^N \triangleq t^N$. We will see later on that our analysis can be easily extended so to cover the other cases addressed in \cite{BTtit}. In addition, we force $N$ to be a linear function of $n$ with some proportionality constant $c$, i.e. $N=cn$. As discussed in \cite{BTtit}, this is the most significant case to study.

{\em The payoff.} Adopting again the Neyman-Pearson approach, the payoff corresponds to the false negative error probability, that is:
\begin{equation}
\label{equation_tr}
    u_D = -u_A = - \hspace{-0.5cm} \underset{\underset{(S^n_{Z}(j;y^n,t^N),t^N) \in \Lambda_{tr}^n}{(y^n,t^N) \in \XX^n \times \XX^{N}:}}{\sum} P_Y(y^n)P_X(t^N),
\end{equation}
where $P_X(t^N)$ is the probability that the source $X$ outputs the sequence $t^N$ and $\Lambda_{tr}^n$ is the acceptance region of the test.

{\em Equilibrium point.}
The derivation of the optimum strategy for D passes through the definition of the generalized log-likelihood ratio function $h(P_{x^n}, P_{t^N})$ defined as (\cite{Gut89,Kendall,BTtit}):
\begin{equation}
    h(P_{x^n}, P_{t^{N}}) = \DD(P_{x^n} || P_{r^{n+N}}) + c \DD(P_{t^N} || P_{r^{n+N}}),
\label{eq.h}
\end{equation}
where $P_{r^{n+N}}$ indicates the empirical pmf of the sequence $r^{n+N}$, obtained by concatenating $x^n$ and $t^N$.
The main result of \cite{BTtit} is summarized by the following theorem.
\begin{theorem}
Let
\begin{equation}
    \Lambda_{tr}^{n,*} = \left\{(P, Q) \in \PP_n \times \PP_N: h(P, Q) < \lambda - \kappa(n,c) \right\},
\label{eq.opt_lambda_Tr_as}
\end{equation}
\begin{equation}
    S_{YZ}^{n,*}(i,j;y^n,t^N) = \arg\min_{S^n_{YZ} \in \A^n(D_{max},P_{y^n})} h(S^n_{Z}, P_{t^N}).
\label{eq.opt_AD_Tr_as}
\end{equation}
where $\kappa(n,c) = |\XX| \frac{\log(n+1)(N+1)}{n}$.
Then $\Lambda_{tr}^{n,*}$ is a dominant equilibrium for D and the profile $(\Lambda_{tr}^{n,*},  S_{YZ}^{n,*}(i,j;y^n,t^N))$ is the only rationalizable equilibrium of the $SI_{tr}$ game with equal training sequences, which, then, is a dominance solvable game \cite{ChenGames}.
\label{theo.NASH_TR_as}
\end{theorem}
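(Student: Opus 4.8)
The plan is to follow the blueprint of Theorem~\ref{theo.NASH_KS_as} for the known-source game, replacing the role played there by $\DD(\cdot\,||P_X)$ with the generalized log-likelihood statistic $h$ of \eqref{eq.h}, and to cope with the fact that $P_X$ is now unknown through a worst-case argument. The proof splits into two parts: first I would show that $\Lambda_{tr}^{n,*}$ is a \emph{dominant} strategy for D, and then that $S_{YZ}^{n,*}$ is A's best response to it, so that the profile is the only rationalizable equilibrium and the game is dominance solvable.

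The cornerstone of the dominance part is the variational identity
\begin{equation}
h(P,Q) = \min_{W \in \PP}\big[\DD(P||W) + c\,\DD(Q||W)\big],
\label{eq.hvar}
\end{equation}
whose minimizer is exactly the concatenated type $W=(P+cQ)/(1+c)=P_{r^{n+N}}$. Identity \eqref{eq.hvar} is what links $h$ to the worst-case false-positive exponent: under $H_0$ both $x^n$ and $t^N$ are emitted by the same unknown $P_X$, so by the method of types (\cite{CandT,CandK}) the joint probability of observing the type pair $(P,Q)$ scales like $2^{-n[\DD(P||P_X)+c\DD(Q||P_X)]}$, whose largest value over $P_X$ — the most favorable for A — is precisely $2^{-n\,h(P,Q)}$. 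I would exploit this in the two complementary directions that the constant $\kappa(n,c)$ is engineered to support. For \emph{admissibility}, summing the type-class upper bound over the complement $\{h\ge\lambda-\kappa(n,c)\}$ and bounding the number of type pairs by $(n+1)^{|\XX|}(N+1)^{|\XX|}$, the factor $2^{n\kappa(n,c)}=[(n+1)(N+1)]^{|\XX|}$ cancels the counting term and yields $\max_{P_X}P_{fp}\le 2^{-\lambda n}$, so $\Lambda_{tr}^{n,*}\in\SS_D$. For \emph{minimality}, if some admissible $\Lambda'_{tr}$ omitted a pair $(P_0,Q_0)$ with $h(P_0,Q_0)<\lambda-\kappa(n,c)$, then evaluating the type-class \emph{lower} bound at the worst-case source $W_0=(P_0+cQ_0)/(1+c)$ produces a contribution to $P_{fp}$ of order $[(n+1)(N+1)]^{-|\XX|}2^{-n\,h(P_0,Q_0)}>2^{-\lambda n}$, violating the constraint for $P_X=W_0$. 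Hence every admissible acceptance region contains $\Lambda_{tr}^{n,*}$; since shrinking the acceptance region can only lower $P_{fn}$ against any fixed attacker (fewer attacked pairs land inside), the minimal admissible region $\Lambda_{tr}^{n,*}$ minimizes the payoff against \emph{every} attacker, i.e. it is dominant.

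For the equilibrium part I would fix D at $\Lambda_{tr}^{n,*}$ and optimize A. Since the payoff \eqref{equation_tr} is a sum over independent pairs $(y^n,t^N)$, with A free to pick the transportation map for each pair separately, A maximizes the payoff term by term. For a given $(y^n,t^N)$ the attacker scores if and only if he can push $S^n_Z$ into $\Lambda_{tr}^{n,*}$, that is iff $\min_{S^n_{YZ}\in\A^n(L_{max},P_{y^n})} h(S^n_Z,P_{t^N})<\lambda-\kappa(n,c)$; the map attaining this minimum is exactly $S_{YZ}^{n,*}$ of \eqref{eq.opt_AD_Tr_as}, and choosing it is a best response whether or not the threshold is met (when it is not, every map gives a null contribution for that pair). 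Thus A's best response to D's dominant strategy is $S_{YZ}^{n,*}$, the profile $(\Lambda_{tr}^{n,*},S_{YZ}^{n,*})$ is the only rationalizable equilibrium, and the game is dominance solvable.

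I expect the main obstacle to be the worst-case-over-$P_X$ step: unlike the known-source case, admissibility and minimality must be certified \emph{simultaneously for all sources}, and the entire argument hinges on the variational identity \eqref{eq.hvar} together with the exact absorption of the polynomial type-counting factors into $\kappa(n,c)$. A further point requiring care is that D and A share the same training sequence $t^N$, so that A can reproduce the statistic $h(\cdot\,,P_{t^N})$ used in the test; it is precisely this coupling that lets the per-pair optimization reproduce D's threshold, and relaxing it (the unequal-training variants of \cite{BTtit}) is what would demand the extension alluded to in the text.
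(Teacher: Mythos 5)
The first thing to note is that the paper does not actually prove this statement: Theorem~\ref{theo.NASH_TR_as} is introduced with the words ``The main result of \cite{BTtit} is summarized by the following theorem,'' so the only in-paper content is a restatement in the transportation-map formalism, and there is no proof here to compare yours against. Judged on its own terms, your blueprint is the right one and, as far as the citations to \cite{Gut89} and \cite{BTtit} allow one to tell, it is the same Gutman-style argument used in the reference. In particular, the variational identity $h(P,Q)=\min_{W}\big[\DD(P||W)+c\,\DD(Q||W)\big]$ with minimizer $(P+cQ)/(1+c)=P_{r^{n+N}}$ is correct and is exactly the device that converts the worst-case-over-$P_X$ false-positive constraint into the single statistic $h$; your minimality step (every admissible $\Lambda'_{tr}$ must contain $\Lambda^{n,*}_{tr}$, via the type-class \emph{lower} bound evaluated at the least favorable source $W_0$) is sound, since $[(n+1)(N+1)]^{-|\XX|}2^{-n\,h(P_0,Q_0)}>2^{-n\lambda}$ precisely when $h(P_0,Q_0)<\lambda-\kappa(n,c)$; and the attacker's per-pair best response is correctly identified as the minimizer of $h(S^n_Z,P_{t^N})$ because $\Lambda^{n,*}_{tr}$ is a sublevel set of $h$.

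The one step that does not close as you wrote it is admissibility. With the threshold at $\lambda-\kappa(n,c)$, the union bound gives
\begin{equation*}
\max_{P_X}P_{fp}\;\le\;\big[(n+1)(N+1)\big]^{|\XX|}\cdot 2^{-n(\lambda-\kappa(n,c))}\;=\;2^{-n(\lambda-2\kappa(n,c))},
\end{equation*}
not $2^{-\lambda n}$: the type-counting factor and the slack in the threshold \emph{add} rather than cancel. The $-\kappa(n,c)$ correction is engineered for the lower-bound (minimality) direction; to certify $P_{fp}\le 2^{-\lambda n}$ by the same union bound one would need the acceptance threshold at $\lambda+\kappa(n,c)$. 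So as stated your argument shows that $\Lambda^{n,*}_{tr}$ is contained in every admissible region and that its false-positive decay rate is at least $\lambda-2\kappa(n,c)\rightarrow\lambda$, which suffices if the constraint in \eqref{eq.SD_TR_as} is read asymptotically, but your claim that the polynomial factors ``exactly cancel'' in the admissibility direction should not stand as written. This is a defect shared with the way the threshold is usually presented in \cite{BT13,BTtit}, and it is immaterial for everything the present paper does with the theorem (all subsequent results only use the $\lambda\rightarrow 0$ limit), but in a self-contained proof it needs to be acknowledged and repaired.
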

As for the $SI_{tr}$ game, by letting $n$ tend to infinity and by exploiting the density of rational numbers in the real line, we can study the asymptotic distinguishability of sequences emitted by any two sources. To express the final result of the above procedure, we need to introduce some definitions. First of all we need to extend the $h$ function so to make it work on general pmf's. We let:
\begin{align}
\label{eq.hc}
    h_c(P,Q) & = \DD(P||U) + c \DD(Q||U); \\ \nonumber
    U  & = \frac{1}{1+c}P + \frac{c}{1+c}Q,
\end{align}
which permits us to define the following sets:
\begin{equation}
\label{eq.Lambda_tr_inf}
    \Lambda_{tr}^*(Q,\lambda) = \{ P \in \PP : h_c(P,Q) \le \lambda\},
\end{equation}
and
\begin{align}
\label{eq.Gamma_tr_inf}
    \Gamma_{tr}(Q,\lambda,L_{max}) = \{ P \in \PP & : \exists R \in \Lambda_{tr}^*(Q,\lambda) \\ \nonumber
    & \text{ s.t. } \text{\em EMD}(P,R) \le L_{max}\}.
\end{align}
The following theorem, proved in \cite{BTtit}, states that the indistinguishability region of the $SI_{tr}$ game is given by $\Gamma_{tr}(P_X, \lambda, L_{max})$, where $P_X$ is the true distribution of the source $X$.
\begin{theorem}
\label{theorem2_tr}
For the $SI_{tr}$ game with equal training sequences available to the players, the error exponent of the false negative error probability at the equilibrium is given by:
\begin{equation}
\label{eq.fnerr_exp_TRc}
    \varepsilon_{tr}(\lambda) = \min_{R} \left[ c \cdot \DD(R || P_X) + \min_{P \in \Gamma_{tr}(R, \lambda, L_{max}) } \DD (P || P_Y)\right]
\end{equation}
leading to the following cases:
\begin{enumerate}
    \item{$\varepsilon_{tr}(\lambda)= 0$, if $P_Y \in \Gamma_{tr}(P_X, \lambda, L_{max})$;}
    \item{${\displaystyle \varepsilon_{tr}(\lambda) \ne 0}$, if $P_Y \notin \Gamma_{tr}(P_X, \lambda, L_{max})$}.
\end{enumerate}
\label{theo.payoff_TR_as}
\end{theorem}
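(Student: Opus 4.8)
The plan is to evaluate the false negative probability $P_{fn}$ in (\ref{equation_tr}) directly at the equilibrium profile supplied by Theorem \ref{theo.NASH_TR_as}, and then extract its exponential decay rate by the method of types. At equilibrium the defender uses $\Lambda_{tr}^{n,*}$ from (\ref{eq.opt_lambda_Tr_as}) and the attacker applies the optimal map $S_{YZ}^{n,*}$ from (\ref{eq.opt_AD_Tr_as}); hence a pair $(y^n,t^N)$ contributes to $P_{fn}$ exactly when the attacker, starting from $y^n$, can produce an attacked sequence whose output type $S^n_Z$ satisfies $h(S^n_Z,P_{t^N}) < \lambda - \kappa(n,c)$. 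Since the reachable $S^n_Z$ are precisely the output marginals of admissible transportation maps, this condition depends only on the types $(P,Q)=(P_{y^n},P_{t^N})$ and, by the optimal transport reformulation already used for $\Gamma_{tr}$ in (\ref{eq.Gamma_tr_inf}), is equivalent to $P \in \Gamma_{tr}(Q,\lambda-\kappa(n,c),L_{max})$.

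First I would group the sum in (\ref{equation_tr}) according to the joint type $(P,Q)$ of $(y^n,t^N)$. Because $N=cn$ and the number of type classes grows only polynomially in $n$, standard type-counting estimates give
\begin{equation}
P_Y(T(P))\,P_X(T(Q)) \doteq 2^{-n\left[\DD(P||P_Y) + c\,\DD(Q||P_X)\right]},
\label{eq.type_mass_tr}
\end{equation}
where $\doteq$ denotes equality of exponential order. Summing only over the type pairs for which the attacker succeeds, the dominant term yields
\begin{equation}
-\frac{1}{n}\log P_{fn} \;\longrightarrow\; \min_{(P,Q):\, P\in\Gamma_{tr}(Q,\lambda,L_{max})}\left[\DD(P||P_Y)+c\,\DD(Q||P_X)\right],
\label{eq.joint_min_tr}
\end{equation}
the polynomial prefactors and the vanishing correction $\kappa(n,c)\to 0$ leaving the exponent unaffected. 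Justifying the replacement of $\Gamma_{tr}(Q,\lambda-\kappa(n,c),L_{max})$ by $\Gamma_{tr}(Q,\lambda,L_{max})$ in the limit, and the interchange of the minimization with the limit over the dense set of rational types, is where a continuity argument in the spirit of Lemma \ref{lemma_A} is needed: every $P$ admissible at level $\lambda-\kappa(n,c)$ must be shown to lie arbitrarily close to one admissible at level $\lambda$, after which continuity of $\DD$, of $h_c$ in (\ref{eq.hc}), and of the {\em EMD} closes the gap.

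Once (\ref{eq.joint_min_tr}) is established, the announced formula (\ref{eq.fnerr_exp_TRc}) follows by reorganizing the joint minimization as a nested one: for each fixed training type (renamed $R$ to match the statement) the inner minimization over $P$ constrained to $\Gamma_{tr}(R,\lambda,L_{max})$ produces $\min_{P\in\Gamma_{tr}(R,\lambda,L_{max})}\DD(P||P_Y)$, and the outer minimization over $R$ adds the cost $c\,\DD(R||P_X)$. The dichotomy is then immediate: both summands are nonnegative, so $\varepsilon_{tr}(\lambda)=0$ holds iff some feasible pair makes both terms vanish, which forces $R=P_X$ and $P=P_Y$ together with $P_Y\in\Gamma_{tr}(P_X,\lambda,L_{max})$; conversely this membership makes $(R,P)=(P_X,P_Y)$ feasible and gives $\varepsilon_{tr}(\lambda)=0$.

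The main obstacle I anticipate is the asymptotic passage from the finite-$n$ picture (types with denominator $n$, acceptance threshold shifted by $\kappa(n,c)$) to the limiting sets $\Lambda_{tr}^*$ and $\Gamma_{tr}$ defined on arbitrary pmf's. Unlike the single-source case, the constraint region $\Gamma_{tr}(Q,\cdot,L_{max})$ here depends on the free variable $Q$, so the closeness argument must be made uniform in $Q$ over the compact probability simplex; securing this uniformity, together with the compactness that legitimizes writing $\min$ rather than $\inf$, is the technically delicate core of the proof.
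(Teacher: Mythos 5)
The paper itself does not prove Theorem \ref{theorem2_tr}: it is imported from \cite{BTtit} (``The following theorem, proved in \cite{BTtit}, states that\ldots''), so there is no in-paper argument to compare yours against. Judged on its own terms, your sketch is the natural method-of-types proof and its skeleton is sound: with $N=cn$ the factorization $P_Y(T(P))P_X(T(Q)) \doteq 2^{-n[\DD(P||P_Y)+c\,\DD(Q||P_X)]}$ is correct, the characterization of the contributing type pairs via the attacker's optimal map (a pair contributes iff some admissible map lands the output type in $\Lambda_{tr}^{n,*}$) is correct since $h$ in \eqref{eq.h} coincides with $h_c$ in \eqref{eq.hc} on rational types, and the reorganization of the joint minimization into the nested form \eqref{eq.fnerr_exp_TRc} is immediate.

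Two points are stated backwards or left genuinely open. First, the direction of the approximation argument: the converse (upper bound on $P_{fn}$) is the easy side, because $\Gamma_{tr}(Q,\lambda-\kappa(n,c),L_{max})\subseteq\Gamma_{tr}(Q,\lambda,L_{max})$ by the monotonicity of Property \ref{prop.inclusion_TR}, so every contributing finite-$n$ type pair already lies in the limiting constraint set and the polynomial count of types finishes the job with no continuity needed. The delicate side is achievability: you must show that the minimizing pair $(R^{*},P^{*})$ of the limiting problem is approximable by rational types $(Q^{n},P^{n})$ with denominators $N$ and $n$ satisfying the \emph{stricter} finite-$n$ condition $h(\cdot,\cdot)<\lambda-\kappa(n,c)$ together with an admissible finite-$n$ transportation map --- i.e.\ the approximation must go from level $\lambda$ down to level $\lambda-\kappa(n,c)$, the opposite of the direction you describe as needing proof. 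Second, your dichotomy argument requires the minimum in \eqref{eq.fnerr_exp_TRc} to be attained (an infimum equal to zero would not by itself force $R=P_X$ and $P=P_Y$); this needs closedness of $\Gamma_{tr}(R,\lambda,L_{max})$ and suitable continuity of the correspondence $R\mapsto\Gamma_{tr}(R,\lambda,L_{max})$ over the simplex, which you flag as delicate but do not supply. Neither issue is fatal, but both must be filled in for the proof to be complete.
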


From the above theorem we see that the sources that cannot be asymptotically distinguished from $P_X$ are those inside $\Gamma_{tr}(P_X, \lambda, L_{max})$. The geometrical interpretation is similar to the one given in Figure \ref{fig.theo2} for Theorem \ref{theo.payoff_KS_as} where now the acceptance region is given by $\Lambda_{tr}^*(P_X,\lambda)$ and the indistinguishability region is $\Gamma_{tr}(P_X, \lambda, L_{max})$.

We point out that the only difference with respect to the case of known sources consists in the asymptotic acceptance region ${\Lambda_{tr}^*}(P_X, \lambda)$, which is strictly larger than ${\Lambda^*}(P_X, \lambda)$, given that $h_c$ function is always lower than $\DD$ (see \cite{BTtit} for the proof). As a consequence, it is straightforward to argue that $\Gamma_{tr}(P_X, \lambda, L_{max}) \supset \Gamma(P_X, \lambda, L_{max})$.

\subsection{Security margin for the $SI_{tr}$ game}
\label{subsec.SMTR}
We now study the behavior of the $SI_{tr}$ game when $\lambda \rarrow 0$ so to investigate the best achievable performance for the defender in the case of training-based decision. To start with, we observe that the divergence and the $h_c$ function share a similar behavior, in that they are convex functions and both $\DD(P||Q)$ and $h_c(P,Q)$ are equal to zero if and only if $P=Q$. This permits to extend Property \ref{prop.inclusion} to the set $\Gamma_{tr}$ yielding:
\begin{property}
\label{prop.inclusion_TR}
For any two values $\lambda_1$ and $\lambda_2$ such that $\lambda_2 < \lambda_1$, $\Gamma_{tr}(P_X, \lambda_2, L_{max}) \subseteq \Gamma_{tr}(P_X, \lambda_1, L_{max})$.
\end{property}
In a similar way, Lemma \ref{lemma_A} can be extended to the set $\Gamma_{tr}(R, \lambda, L_{max})$ (Appendix \ref{app.lemma_A}).

We are now ready to prove the counterpart of Theorem \ref{theorem_EMD} for the $SI_{tr}$ game.
\begin{theorem}
\label{theorem_EMD_tr}
Given two sources $X \sim P_X$ and $Y \sim P_Y$ and a maximum allowable average per-letter distortion $L_{max}$ (defined according to an additive distortion measure),
the maximum achievable false negative error exponent for the $SI_{tr}$ game is
\begin{equation}
\label{best_e_e_tr}
\lim_{\lambda \rarrow 0}\varepsilon_{tr}(\lambda) = \min_{R} \big[ c \cdot \DD(R || P_X) + \min_{P \in \Gamma(R, L_{max}) } \DD (P || P_Y)\big],
\end{equation}
where $\Gamma(R, L_{max})$ is defined as in (\ref{Gamma_X}) by replacing $P_X$ with $R$\footnote{Note that when $\lambda$ tends to 0, we do not need anymore to differentiate between the $SI_{ks}$ and $SI_{tr}$ games in the definition of $\Gamma(R, L_{max})$.}.
\end{theorem}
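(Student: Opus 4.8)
The plan is to follow the same path as the proof of Theorem~\ref{theorem_EMD}, handling the inner minimization exactly as in the $SI_{ks}$ case and then accounting for the extra outer minimization over $R$. For fixed $\lambda$, Theorem~\ref{theorem2_tr} gives
\[
\varepsilon_{tr}(\lambda) = \min_R f_\lambda(R), \qquad f_\lambda(R) = c\,\DD(R||P_X) + \min_{P\in\Gamma_{tr}(R,\lambda,L_{max})}\DD(P||P_Y),
\]
and the goal is to show $\lim_{\lambda\to 0}\varepsilon_{tr}(\lambda)=\min_R f_0(R)$ with $f_0(R)=c\,\DD(R||P_X)+\min_{P\in\Gamma(R,L_{max})}\DD(P||P_Y)$. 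The structural fact driving everything is that $h_c(P,R)=0$ iff $P=R$, so by (\ref{eq.Gamma_tr_inf}) the acceptance set $\Lambda_{tr}^*(R,\lambda)$ shrinks to $\{R\}$ and $\Gamma_{tr}(R,\lambda,L_{max})$ collapses onto $\Gamma(R,L_{max})$ as $\lambda\to 0$, exactly mirroring the collapse of $\Lambda^*$ onto $\{P_X\}$ depicted in Figure~\ref{fig.theo3}.

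First I would establish existence of the limit. By Property~\ref{prop.inclusion_TR}, for each fixed $R$ the map $\lambda\mapsto f_\lambda(R)$ is non-increasing in $\lambda$; hence, taking a minimizer $R_{\lambda_2}$ of $f_{\lambda_2}$ for $\lambda_2<\lambda_1$,
\[
\varepsilon_{tr}(\lambda_2)=f_{\lambda_2}(R_{\lambda_2})\ge f_{\lambda_1}(R_{\lambda_2})\ge\min_R f_{\lambda_1}(R)=\varepsilon_{tr}(\lambda_1),
\]
so $\varepsilon_{tr}(\lambda)$ is monotone non-decreasing as $\lambda\to 0$. Since $\Gamma(R,L_{max})\subseteq\Gamma_{tr}(R,\lambda,L_{max})$ for every $\lambda>0$ and every $R$, we have $f_\lambda(R)\le f_0(R)$ pointwise, whence $\varepsilon_{tr}(\lambda)\le\min_R f_0(R)$. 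A monotone sequence bounded above converges, so $\varepsilon_{tr}^*:=\lim_{\lambda\to 0}\varepsilon_{tr}(\lambda)$ exists and satisfies $\varepsilon_{tr}^*\le\min_R f_0(R)$.

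For the matching lower bound I would invoke the extension of Lemma~\ref{lemma_A} to the sets $\Gamma_{tr}(R,\lambda,L_{max})$. Let $R^*_\lambda$ attain $\min_R f_\lambda(R)$ and let $P^*_\lambda$ attain the corresponding inner minimum. The extended lemma guarantees that for every $\tau>0$ one may pick $\lambda$ small enough that $P^*_\lambda$ lies within distance $\tau$ of some $P'\in\Gamma(R^*_\lambda,L_{max})$; by continuity of $\DD$ this yields $\DD(P'||P_Y)\le\DD(P^*_\lambda||P_Y)+\delta(\tau)$ with $\delta(\tau)\to 0$. Since $P'\in\Gamma(R^*_\lambda,L_{max})$, we have $\min_{P\in\Gamma(R^*_\lambda,L_{max})}\DD(P||P_Y)\le\DD(P'||P_Y)$, and adding $c\,\DD(R^*_\lambda||P_X)$ to both sides gives $f_0(R^*_\lambda)\le f_\lambda(R^*_\lambda)+\delta(\tau)=\varepsilon_{tr}(\lambda)+\delta(\tau)$. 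Bounding $f_0(R^*_\lambda)\ge\min_R f_0(R)$ and letting $\lambda\to 0$ (so $\tau,\delta(\tau)\to 0$) gives $\min_R f_0(R)\le\varepsilon_{tr}^*$, which together with the upper bound closes the argument.

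I expect the main obstacle to be the dependence of the outer minimizer $R^*_\lambda$ on $\lambda$: the proximity conclusion of the extended Lemma~\ref{lemma_A} must hold for whichever $R^*_\lambda$ arises, so the $\tau$--$\lambda$ relationship has to be uniform over the choice of $R$ (equivalently, over the compact probability simplex). This uniformity should follow from the uniform continuity of $h_c(\cdot,\cdot)$ and of the EMD on the compact simplex, but it is the one step that genuinely goes beyond a verbatim repetition of the $SI_{ks}$ argument and must be stated with care; alternatively, one can sidestep uniformity by passing to a convergent subsequence $R^*_{\lambda_k}\to\bar R$ (using compactness) and applying the fixed-$R$ collapse of $\Gamma_{tr}$ at the limit point $\bar R$.
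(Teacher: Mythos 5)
Your proof is correct and follows essentially the same route as the paper's: monotonicity of $\varepsilon_{tr}(\lambda)$ via Property~\ref{prop.inclusion_TR}, the upper bound from $\Gamma(R,L_{max})\subseteq\Gamma_{tr}(R,\lambda,L_{max})$, and the matching lower bound via the extension of Lemma~\ref{lemma_A} to the sets $\Gamma_{tr}(R,\lambda,L_{max})$, exactly as in the proof of Theorem~\ref{theorem_EMD}. Your closing remark about the $\tau$--$\lambda$ relationship needing to be uniform in $R$ identifies a genuine subtlety that the paper glosses over (it applies its inequality for all $R$ with a single $\delta(\tau)$ and minimizes over $R$), and either of your proposed remedies --- uniformity from compactness of the probability simplex, or passing to a convergent subsequence of the minimizers $R^*_{\lambda}$ --- would make that step fully rigorous.
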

\begin{proof}
The proof goes along the same line of the proof of Theorem \ref{theorem_EMD}.
From Property \ref{prop.inclusion_TR}, we see immediately that $\varepsilon(\lambda)$ is non-increasing when $\lambda$ decreases, since the innermost minimization in equation (\ref{eq.fnerr_exp_TRc}) is taken over a smaller set when $\lambda$ decreases. Then, by the same token, we have:
\begin{equation}
\varepsilon_{tr}(\lambda) \le \min_{R} \big( c \DD(R||P_X) + \min_{P \in \Gamma(R,D_{max})} \DD(P || P_Y)\big).
\end{equation}
This implies that $\lim_{\lambda \rarrow 0} \varepsilon(\lambda)$ exists and is finite. Given that Lemma \ref{lemma_A} still holds for the set $\Gamma_{tr}(R,\lambda,L_{max}) ~ \forall R$, we can reason as in the proof of Theorem \ref{theorem_EMD} to conclude that:
\begin{equation}
\min_{P \in \Gamma_{tr}(R, \lambda, L_{max})} \DD(P || P_Y) \ge \min_{P \in \Gamma(R,L_{max})} \DD(P || P_Y) - \delta(\tau),
\label{divergence_continuity_Gamma_tr}
\end{equation}
where $\delta(\tau)$ can be made arbitrarily small by decreasing $\lambda$. By adding the term $c \DD (R|| P_X)$ to both sides of (\ref{divergence_continuity_Gamma_tr})  we obtain:
\begin{align}
\label{low_bound_proof_tr_2}
c \DD(R||P_X)& + \min_{P \in \Gamma_{tr}(R, \lambda, L_{max})} \DD(P || P_Y)  \ge \\
& c \DD(R||P_X) +  \min_{P \in \Gamma(R,L_{max})} \DD(P || P_Y) - \delta(\tau).\nonumber
\end{align}
Given that \eqref{low_bound_proof_tr_2} holds for any $R \in \PP$, we can write:
\begin{align}
\varepsilon_{tr}(\lambda) & = \min_R \big[ c \DD(R||P_X) + \min_{P \in \Gamma_{tr}(R, \lambda, L_{max})} \DD(P || P_Y) \big] \\ \nonumber
& \ge \min_{R} \big[ c \DD(R||P_X) + \min_{P \in \Gamma(R,L_{max})} \DD(P || P_Y)\big] - \delta(\tau),
\end{align}
which concludes the proof due to the arbitrariness of $\delta(\tau)$.
\end{proof}
A consequence of Theorem \ref{theorem_EMD_tr} is that $\lim_{\lambda \rarrow 0} \varepsilon(\lambda) = 0$ if and only if $P_Y \in \Gamma(P_X,L_{max}$), which then can be seen as the smallest indistinguishability region for the $SI_{tr}$ game. We conclude that the smallest indistinguishability regions for the two cases are the same thus implying that the security margin for the $SI_{tr}$ setting, say $\SS\MM_{tr}$, is the same of the $SI_{ks}$ game, that is
\begin{equation}
\SS\MM_{tr}(P_X,P_Y) = \text{\em EMD}(P_X,P_Y).
\end{equation}
We remark that, for any allowed distortion $L_{max} < \text{\em EMD}(P_X,P_Y)$, the minimum value of the false positive error exponent ($\lambda$) which allows the defender to take a reliable decision in the $SI_{tr}$ setting is lower than that in the $SI_{ks}$ setting. However, the difference between the two settings regards the decay rate of the error probabilities, not the ultimate distinguishability of the sources.

We conclude this section with a brief discussion on the $SI_{tr}$ game with different training sequences ($t_{D}^{N} \ne t_{A}^K$). Such a scenario provides a more realistic model in which the attacker is not able to compute exactly the acceptance region adopted by the Defender. It is known from \cite{BTtit} that, as long as the length of both sequences grows linearly with $n$, the indistinguishability region is equal to that of the game with equal training sequences. By relying on this result, it is not difficult to prove that the security margin remains the same even for such version of the game.

\section{Security margin computation}
\label{sec.SMcomputation}

In this section we address the problem of the actual computation of the security margin for two generic sources. By following the analysis given so far, we focus on the case of discrete sources, however at the end of the section we extend the analysis so to cover continuous sources as well.

Given two discrete sources $X \sim P_X$ and $Y \sim P_Y$, the computation of the security margin requires the evaluation of {\em EMD$(P_X, P_Y)$}. A closed form solution can be found only in some simple cases (see Section \ref{sec.example1} and \ref{sec.example2}).
More generally, the {\em EMD} between two sources can be computed by resorting to numerical analysis, and in fact, due to its wide use as a similarity measure in computer vision applications, several efficient algorithms have been proposed (see \cite{PeleEMD} for example). In the following, we describe a fast iterative algorithm for the computation of the {\em EMD} between any two sources assuming that the distortion (or cost) function has the general form:
\begin{equation}
\label{eq.dist_EMD}
    d(i,j) = |i-j|^p,
\end{equation}
with $p \ge 1$. This is a case of great interest for $p=1$ and $p=2$, according to which the distortion between $y^n$ and the attacked sequence $z^n$ corresponds, respectively, to the $L_1$ and $L_2^2$ distance.

\subsection{Hoffman's greedy algorithm for computing $\SS\MM$}
\label{sec.Huffman_NWC}

Let us assume that $X$ and $Y$ are discrete sources with alphabets $\XX$ and $\YY$. The transportation problem we have to solve for computing  $\SS \MM(P_Y,P_X)$, i.e. {\em EMD}$(P_Y,P_X)$, is known in modern literature as {\em Hitchcock transportation problem} \cite{Hitchcock}\footnote{This is the discrete version of the Monge-Kantorovich mass transportation problem \cite{rachev1998mass}.}, which, in turn, can be formulated as a linear programming problem in the following way:
%
\begin{equation}
\label{linear_programming1}
\text{\em EMD}(P_X,P_Y) = \min_{S_{XY}} \sum_{i,j}  d(i,j) S_{XY}(i,j),
\end{equation}
where $S_{XY}$ must satisfy the linear constraints:
\begin{align}\label{linear_programming2}
\sum_{j} S_{XY}(i,j) & = P_X(i) & \forall i \in \XX\nonumber\\
\sum_{i} S_{XY}(i,j) & = P_Y(j) & \forall j \in \YY\nonumber\\
S_{XY}(i,j) & \ge 0  \quad \forall i,j,
\end{align}
and where, by referring to the original Monge formulation\footnote{Monge is considered the founding father of optimal transport \cite{monge1781}.}, $S_{XY}(i,j)$ denotes the quantity of soil shipped from location (source) $i$ to location (sink) $j$ and $d(i, j)$ is the cost for shipping a unitary amount of soil from $i$ to $j$.

A Transportation Problem (TP) like the one defined by equations (\ref{linear_programming1}) and (\ref{linear_programming2}) is a particular minimum cost flow problem \cite{Network} which, being linear, can be solved through the simplex method \cite{OR}.
In general, the solution of TP depends on the cost function $d(\cdot,\cdot)$, however there are some classes of cost functions for which the solution can be found through a simple greedy algorithm. Specifically, the algorithm proposed by A.J. Hoffman in 1963 \cite{hoffman1963}, allows to solve the transportation problem whenever $d(\cdot,\cdot)$ satisfies the so called Monge property \cite{Monge}, that is when:
\begin{equation}
d(i,j) + d(r,s) \le d(i,s) + d(r,j),
\end{equation}
$\forall (i,j,r,s)$ such that $1 \le i < r \le |\XX|$ and  $1 \le j < s \le |\YY|$. \\
It is easy to verify that Monge property is satisfied by any cost function of the form in (\ref{eq.dist_EMD}), and, more in general, by any convex function of the quantity $|i-j|$.
The iterative procedure proposed by Hoffman to solve the optimal transport problem is known as {\em north-west corner ({\em NWC}) rule} \cite{hoffman1963} and can be described as follows. Take the bin of $\XX$ with the smallest value and start moving its elements into the bin with the smallest value in $\YY$. When the smallest bin of $\YY$ is filled, go on with the second smallest bin in $\YY$. Similarly, when the smallest bin in $\XX$ is emptied, go on with the second smallest bin in $\XX$. The procedure is iterated until all the bins in $\XX$ have been moved into those of $\YY$. Let $i^{low}$ ($i^{up}$) and $j^{low}$ ($j^{up}$) denote the lower (upper) non-empty bins of $\XX$ and $\YY$ respectively. A pseudocode description of the {\em NWC} rule is given below.
\begin{enumerate}
\label{greedy_algorithm}
   \item Initialize: $i:=i^{low}$, $j:=j^{low}$.
   \item Set $S_{XY}(i,j) := \min\{P_X(i), P_Y(j)\}$.
   \item Adjust the \lq supply' distribution $P_X(i) := P_X(i) - S_{XY}(i,j)$ and the \lq demand' distribution $P_Y(j) := P_Y(j) - S_{XY}(i,j)$.\\
   If $P_X(i) = 0$ then $i := i+1$ and if $P_Y(j) = 0$ then $j := j+1$.
   \item  If $j < j^{up}$ or $P_Y(j^{up}) > 0$  go back to Step 2).
\end{enumerate}
The above procedure is described graphically in Figure \ref{fig.opt_trans}. In the figure, we chose two distributions with disjoint supports for sake of clarity, however the procedure is valid regardless of how the two distributions are spread along the real line. Interestingly, the {\em NWC} rule does not depend explicitly on the cost matrix, so the transportation map obtained through it is the same regardless of the Monge cost.
According to Hoffman's greedy algorithm, when the cost function satisfies Monge's property, the {\em EMD} can be computed in linear running time: the number of elementary operations, in fact, is at most equal to $|\XX| + |\YY|$\footnote{For sake of simplicity, the iterative algorithm described by the pseudocode spans all the bins between the minimum and the maximum non-empty bins. However, only the values $i \in \XX$ and $j \in \YY$ must be considered given that for all the empty bins $i$ and $j$ we have $S_{XY}(i,j) = 0$.}. This represents a dramatic simplification with respect to the complexity required to solve a general Hitchcock transportation problem (see for example \cite{orlin1993}).\\


As detailed below, in some cases, it is possible to derive a closed form expression for the security margin.

\subsubsection{Uniform sources with different cardinalities}
\label{sec.example1}

Let $X$ and $Y$ be two uniform pmf's with alphabets $\XX$ and $\YY$ such that $|\XX| = \alpha |\YY|$, with $\alpha \in \mathbb{N}$. In this case, thanks to Hoffman's algorithm we can express $\SS \MM (P_X, P_Y)$ in closed form:
\begin{equation}
 \SS \MM_{L_p^p}(P_X,P_Y) = \frac{1}{|\mathcal{Y}|}\sum_{i=0}^{|\XX| - 1} \sum_{j=0}^{\alpha - 1} (|i^{low} - j^{low}| - j - (\alpha - 1)i)^p,
\end{equation}
%
The formula implicitly assumes that $j^{low} > i^{low}$, the extension to the case in which such a relationship does not hold being immediate.
\begin{figure}[t!]
\centering \includegraphics[width = 0.95\columnwidth]{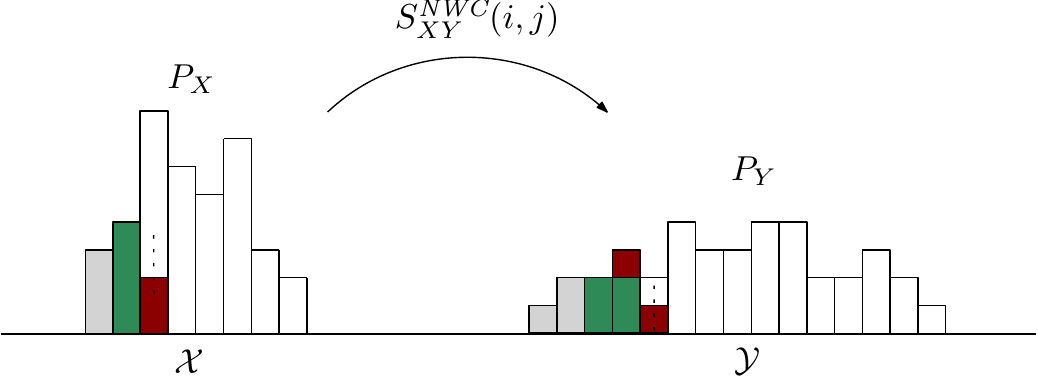}
\caption{Graphical representation of the north-west corner rule for the earth mover transportation problem (Monge problem). $P_X$ and $P_Y$ are two generic earth piles (source and sink) $\XX$ and $\YY$, while $S_{XY}^{\text{\em NWC}}(i,j)$ denotes the amount of earth moved from location $i$ to $j$.}
\label{fig.opt_trans}
\end{figure}

\subsubsection{Security Margin under the $L_1$ distance}
\label{sec.example2}

If the distortion function corresponds to the $L_1$ distance, the {\em EMD} (and hence the security margin) assumes a particularly simple form. Specifically, by applying the flow decomposition principle \cite{flowDecTP1962}, it is easy to see that the security margin between $P$ and $Q$ can be calculated as follows:
\begin{equation}
\label{SM_L1}
\SS\MM_{L_1}(P,Q) = \sum_{i= \min\{i^{low}, j^{low}\}}^{\max\{i^{up}, j^{up}\}}  \left|\sum_{s = 1}^{i} (P(s) - Q(s))\right|.
\end{equation}
%


\subsection{Continuous sources}

The analysis carried out in the previous sections is limited to discrete sources. When continuous sources are considered, we can quantize the probability density functions (pdf's) of the sources and apply the analysis for discrete sources. By letting the quantization step tend to zero, the {\em EMD} between $P_X$ and $P_Y$ can still be regarded as the security margin between the two sources. In this case, a general expression for the $\SS\MM$ can be derived by considering the {\em continuous transportation problem} (CTP), known as Monge-Kantorovic formulation of the mass transportation problem:
\begin{equation}
\label{eq.Monge_Kant}
\SS\MM(P_X,P_Y)  = \min_{S_{XY}(x,y)} \int \! \! \int c(x,y) S_{XY}(x,y)dx dy,
\end{equation}
subject to the constraints
\begin{align}
\label{eq.Monge_Kant_const}
\int S_{XY}(x,y) dx  = P_Y(y) \\ \nonumber
\int S_{XY}(x,y) dy  = P_X(x) \\ \nonumber
S_{XY}(x,y) \ge 0  \quad \forall x,y,
\end{align}
where $c$ is a continuous cost function $c(x,y): X \times Y \rightarrow \mathds{R}$. If $c(x,y)$ satisfies the continuous Monge property \cite{Monge}, that is if:
\begin{equation}
c(x,y) + c(x',y') \le c(x',y) + c(x,y'), 
\end{equation}
for all $x \le x'$, $y \le y'$,
the optimum transportation map corresponds to the Hoeffding distribution \cite{rachev1985monge} defined as follows. Let $C_X(x)$ and $C_Y(y)$ be the cumulative distributions of $X$ and $Y$ respectively, and let $C_{XY}(x,y)$ be the cumulative transportation map, that is:
\begin{equation}
\label{eq.cumulativeMap}
    C_{XY}(x,y) = \int_{-\infty}^x \! \int_{-\infty}^y S_{XY} (u,v) du dv.
\end{equation}
The optimum transportation map is obtained by letting:
\begin{equation}\label{HoeffdingDistr}
C_{XY}^*(x,y) = \min\{C_X(x), C_Y(y)\}, \quad \forall(x,y) \in \mathds{R}^2,
\end{equation}
which generalizes the {\em NWC} rule. Given the optimum transportation map, one can compute $\SS \MM(P_Y,P_X)$ by evaluating the integral in (\ref{eq.Monge_Kant}).In general, however,  finding a closed form expression is not an easy task.

A particularly simple and insightful formula can be obtained when the cost function corresponds to the squared Euclidean distance. Let us assume, then, that $c(x,y) = (x-y)^2$ (in this case $\SS\MM(P_X,P_Y)$ corresponds to the squared Mallows distance - see equation (\ref{Mallow_dist}) -  and let $X$ and $Y$ be two continuous sources with means $\mu_X$ and $\mu_Y$, variances $\sigma_X$ and $\sigma_Y$ and covariance $covXY$. As shown in \cite{IrpinoR} (decomposition theorem), the expectation in (\ref{Mallow_dist}) can be rewritten as follows:
\begin{align}
\label{decomposition_theorem}
E_{XY} [(X - Y)^2]  = &  (\mu_X - \mu_Y)^2 + (\sigma_X - \sigma_Y)^2\\ \nonumber
& + 2[\sigma_X\sigma_Y - covXY],
\end{align}
where the three terms express, respectively, the difference in location, spread and shape between the variables $X$ and $Y$ \cite{kovsmeljmallows}. Interestingly, the covariance $covXY$ is the only term in (\ref{decomposition_theorem}) which depends on the joint pdf of $X$ and $Y$. Then, in order to find the security margin, we only have to compute the maximum covariance over all the possible joint pdf's:
%
\begin{align}
\SS\MM_{L_2^2}(P_X, P_Y) = & (\mu_X - \mu_Y)^2 + (\sigma_X - \sigma_Y)^2\\ \nonumber
&  + 2[\sigma_X\sigma_Y - \max_{P_{XY}:\tiny{\substack{\sum_y P_{XY} = P_X \\ \sum_x P_{XY} = P_Y}}} covXY].
\label{covariance}
\end{align}
By assuming that $X$ and $Y$ are independent, i.e. $P_{XY} = P_X P_Y$, we have $covXY =0$, hence permitting us to derive a general upper bound for the security margin:\footnote{We point out that relation (\ref{decomposition_theorem}), as well as the upper bound in (\ref{SMupperbound}), holds for the discrete case too.}
\begin{equation}
\label{SMupperbound}
\SS\MM_{L_2^2}(P_X,P_Y) \le (\mu_X - \mu_Y)^2 + \sigma^2_X + \sigma^2_Y.
\end{equation}
When $P_X$ and $P_Y$ have the same form, for instance when the random variables $X$ and $Y$ are both distributed according to a Gaussian or a Laplacian distribution, the security margin assumes a particularly simple expression. In this case, in fact, it is possible to turn $P_X$ into $P_Y$ by imposing a deterministic relationship between $X$ and $Y$, namely $Y = \frac{\sigma_Y}{\sigma_X} X + (\mu_Y - \frac{\sigma_Y}{\sigma_X} \mu_X)$. In this way the covariance term is maximum and equal to $\sigma_X \sigma_Y$, and hence the contribution of the shape term in the security margin vanishes, yielding:
\begin{equation}
{\SS \MM}_{L_2^2}(P_X, P_Y) =  (\mu_X - \mu_Y)^2 + (\sigma_X - \sigma_Y)^2.
\label{sm_continuous_same_class}
\end{equation}
This is a remarkable, and somewhat surprising, result stating that the distinguishability of two sources belonging to the same class depends only on their means and variances, regardless of their particular pdf.

\section{The Security Margin with $L_{\infty}$ distance}
\label{sec.maxdiff}

We conclude the paper by extending the definition of the Security Margin to the case in which the distortion measure constraining the attacker is expressed in terms of the maximum absolute distance between the samples of $y^n$ and $z^n$, that is to the case in which the distortion is measured by relying on the $L_{\infty}$ distance.

The interest in this case is motivated by the importance that the $L_{\infty}$ distance has in applications where the perceptual distortion between the sequence $y^n$ and the attacked sequence $z^n$ must be taken into account. This is the case, for instance, in image forensics applications \cite{Boh12, BarniMMSEC, BarniIJDCF, BT12,FernPedro2013}, wherein the attacker is interested in hiding the true source of an image. In this case, the use of a distortion measure based on the $L_{\infty}$ distance ensures that the attacked image is perceptually similar to the original one. In our analysis, we will refer to the case of known sources, the extension to the $SI_{tr}$ case being immediate.

\subsection{The $SI_{ks}$ game with $L_{\infty}$ distance}

We start by observing that the adoption of the $L_{\infty}$ distance requires that the $SI_{ks}$ game is, partly, redefined due to the non-additive nature of the distortion constraint. In this case, in fact, it does not make any sense to define the distortion constraint in terms of average per-letter distortion and let the overall allowed distortion to increase with $n$.


Similarly to the previous cases, it is possible to express the distortion constraint by limiting the set of transportation maps the attacker can choose from. More specifically, we observe that the maximum distance between the two sequences $y^n$ and $z^n$ can  be rewritten as follows:
\begin{equation}
\label{link_L_infty_TPmap}
 d_{L_\infty}(y^n, z^n) = \max_j |z_j - y_j| =  \max_{(i,j) : S_{YZ}^n(i,j) \neq 0} |i-j|.
\end{equation}
By using the above formula in the definition of the set of admissible maps (i.e. in the second line of \eqref{eq.admissiblemap1}), we can still define the set of strategies of the attacker as the set of rules associating an admissible map to the to-be-attacked sequence, as in \eqref{eq.SA_TR_as}. In the following, we will refer to the set of admissible maps resulting from the use of the $d_{L_\infty}$ distance as $\A_{L_\infty}^n(L_{max}, P_{y^n})$.

Passing to the analysis of the indistinguishability region, it is easy to see that relation (\ref{eq.indistinguib}) continues to hold by replacing $\A^n(L_{max}, P_{y^n})$ with $\A_{L_\infty}^n(L_{max}, P_{y^n})$. In fact, the dominant strategy for the defender does not depend on the set of strategies available to the attacker. The asymptotic version of  $\Gamma_{L_\infty}^n(P_X, \lambda, L_{max})$ can also be defined as in (\ref{eq.indistinguib_limit}), namely:
\begin{align}
\label{eq.indistinguib_infty}
    \Gamma_{L_{\infty}}(P_X, \lambda, L_{max}) &= \\ \nonumber
    \{P \in \PP : \exists ~ S_{YZ} & \in \A_{L_{\infty}}(L_{max}, P) \text{ s.t. } S_{Z} \in \Lambda^{*}(P_X, \lambda)\},
\end{align}
where $\A_{L_{\infty}}(L_{max}, P)$ is the asymptotic counterpart of $ \A^n_{L_{\infty}}(L_{max}, P)$. The next step requires the extension of Theorem \ref{theorem2} to the $SI_{ks}$ game with $L_{\infty}$ distance, that is we need to prove that the set in (\ref{eq.indistinguib_infty}) contains all the sources that can not be distinguished from $X$ because of the attack, even when the length of the observed sequence tends to infinity. This is a critical step since such theorem was proved in \cite{CandT} by assuming an additive distortion measure, which clearly is not the case when the $L_{\infty}$ distance is adopted. Roughly speaking, we need to prove that when $n \rarrow \infty$ the elements of $\Gamma_{L_{\infty}}^n(P_X, \lambda, L_{max})$ are dense in $\Gamma_{L_{\infty}}(P_X, \lambda, L_{max})$ (in which case Theorem \ref{theorem2} can be proven in a way similar to Sanov's Theorem \cite{CandT}). More formally, we need to prove that for any $P_Y \in \Gamma_{L_{\infty}}(P_X, \lambda, L_{max})$ and any $\delta > 0$, a pmf $Q^n \in \Gamma^n_{L_{\infty}}(P_X, \lambda, L_{max})$ exists such that the distance between $P_Y$ and $Q^n$ is smaller than $\delta$. The proof requires only some minor modifications with respect to the proof given in \cite{BT13} (Lemma 2 in the Appendix) and is skipped for sake of brevity.

\subsection{Security Margin for the $SI_{ks}$ game with $L_{\infty}$ distance}

As a next step, we must study the behavior of the indistinguishability region of the test when $\lambda \rarrow 0$ (to determine the smallest indistinguishability region). As we will see, even if the adoption of the $d_{L_{\infty}}$ distance prevents a direct formulation of the problem in terms of {\em EMD}, the distinguishability between two sources $X$ and $Y$ is still closely related to the optimal transportation map between  $P_X$ and $P_Y$. The basis for such a connection is rooted in the following property.

\begin{property}
\label{property_Huffman_dinfty}
Given two distributions $P$ and $Q$, the transportation map $S_{PQ}^{\text{\em NWC}}$ obtained by applying the {\em NWC} rule to P and Q is a solution of the problem
\begin{equation}
\label{relation_d_infty_NWC}
\min_{S_{YZ}: S_Y = P, S_Z = Q} \left(\max_{(i,j) \in S_{YZ}(i,j) \neq 0} |i - j|\right).
\end{equation}
\end{property}
\begin{proof}
Let $S^* \ne S_{PQ}^{\text{\em NWC}}$ be a generic transformation mapping $P$ into $Q$. Given that $S^* \ne S_{PQ}^{\text{\em NWC}}$ there exists at least one quadruple of bins $(t,r,v,s)$, with $t < r$ and $v < s$, for which, $S^*(t,s) > 0$ and $S^*(r,v) > 0$. Let us assume, without loss of generality, that $S^*(t,s) \le S^*(r,v)$.
We now define a new map $S'$ which is obtained from $S^*$ by letting:
\begin{align}
\label{eq.Sprimo}
     S'(t,v) &= S^*(t,v) + S^*(t,s) \\ \nonumber
     S'(t,s) &= 0 \\ \nonumber
     S'(r,v) &= S^*(r,v) - S^*(t,s) \\ \nonumber
     S'(r,s) &= S^*(r,s) + S^*(t,s).
\end{align}
Since $\max\{|t - s|, |r - v|\} > \max\{|t - v|, |r - s|\}$, the maximum distortion introduced by $S'$ is lower than or equal to that introduced by $S^*$, that is:
\begin{equation}
\label{any_map_vs_NWC}
\max_{(i,j) \in S^*(i,j) \neq 0} |i - j| \ge \max_{(i,j) \in S'(i,j) \neq 0} |i - j|.
\end{equation}
We now inspect $S'$, if there is another quadruple of bins $(t',r',v',s')$ satisfying the same properties of $(t,r,v,s)$, we let $S^* = S'$ and iterate the above procedure. The process ends when no quadruple of bins with the required properties exists and hence when $S' = S_{PQ}^{\text{\em NWC}}$. Since at each step the distortion introduced by the new map does not increase, the above procedure proves that $S_{PQ}^{\text{\em NWC}}$ introduces a distortion lower than or equal to that introduced by any other $S^*$ mapping $P$ into $Q$, thus proving that $S_{PQ}^{\text{\em NWC}}$ achieves the minimum in \eqref{relation_d_infty_NWC}.
\end{proof}
Thanks to Property \ref{property_Huffman_dinfty}, the set $\Gamma_{L_{\infty}}(P_X, \lambda, L_{max})$ in \eqref{eq.indistinguib_infty} can be rewritten as follows:
\begin{align}
\label{eq.indistinguib_EMD_Linfty}
    \Gamma_{L_{\infty}}(P_X, \lambda, L_{max}) &  =
    \{P \in \PP :  ~\exists ~ Q \in {\Lambda^*}(P_X, \lambda) ~ \text{s.t} ~  \\ &\max_{(i,j) : S_{PQ}^{\text{\em NWC}}(i,j) \neq 0} |i - j| \le L_{max} \}.\nonumber
\end{align}
By letting $\lambda$ tend to 0, we obtain the smallest indistinguishability region, thus extending Theorem \ref{theorem_EMD} to the $SI_{ks}$ game with $d_{L_{\infty}}$ distance.
\begin{theorem}
\label{theorem_EMD_Linfty}
Given two sources $X \sim P_X$ and $Y \sim P_Y$ and a maximum allowable per-letter distortion $L_{max}$, and given:
\begin{equation}
\Gamma(P_X,L_{max}) = \{P \in \PP: \max_{(i,j) \in S_{P P_X}^{\text{\em NWC}}} |i - j| \le L_{max}\},
\label{Gamma_X_infty}
\end{equation}
the maximum achievable false negative error exponent $\varepsilon$ for the $SI_{ks}$ game with $L_{\infty}$ distance is
\begin{equation}
\lim_{\lambda \rarrow 0} \lim_{n \rarrow \infty} - \frac{1}{n} \log P_{fn} = \min_{P \in \Gamma_{L_{\infty}}(P_X,L_{max})} \DD(P || P_Y).
\label{best_e_e_appendix}
\end{equation}
\end{theorem}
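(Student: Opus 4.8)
The plan is to reproduce the proof of Theorem~\ref{theorem_EMD} almost verbatim, with the {\em EMD} constraint replaced by the $L_\infty$ (bottleneck) transportation cost $g(P,Q)\triangleq\max_{(i,j):S^{\text{\em NWC}}_{PQ}(i,j)\neq0}|i-j|$, which by Property~\ref{property_Huffman_dinfty} is the value of the min--max transport problem. Three facts drive the argument. First, the inner limit is already settled: by the extension of Theorem~\ref{theorem2} announced in the previous subsection, $\varepsilon(\lambda)\triangleq\lim_{n\to\infty}-\frac1n\log P_{fn}=\min_{P\in\Gamma_{L_\infty}(P_X,\lambda,L_{max})}\DD(P||P_Y)$, so only the behaviour as $\lambda\to0$ must be analysed. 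Second, from the rewriting (\ref{eq.indistinguib_EMD_Linfty}) the set $\Gamma_{L_\infty}(P_X,\lambda,L_{max})$ depends on $\lambda$ only through $\Lambda^*(P_X,\lambda)$; since $\Lambda^*(P_X,\lambda_2)\subseteq\Lambda^*(P_X,\lambda_1)$ for $\lambda_2<\lambda_1$, the exact analogue of Property~\ref{prop.inclusion} holds and $\varepsilon(\lambda)$ is monotonically non-decreasing as $\lambda\downarrow0$.

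For the upper bound I would note that $\Gamma_{L_\infty}(P_X,L_{max})$ in (\ref{Gamma_X_infty}) is precisely the instance of (\ref{eq.indistinguib_EMD_Linfty}) obtained by forcing $Q=P_X$, i.e.\ the limiting case in which $\Lambda^*$ collapses to $\{P_X\}$. Hence $\Gamma_{L_\infty}(P_X,L_{max})\subseteq\Gamma_{L_\infty}(P_X,\lambda,L_{max})$ for every $\lambda>0$, giving $\varepsilon(\lambda)\le\min_{P\in\Gamma_{L_\infty}(P_X,L_{max})}\DD(P||P_Y)$. Combined with monotonicity, $\lim_{\lambda\to0}\varepsilon(\lambda)$ exists, is finite, and is bounded above by the claimed value.

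The substance is the matching lower bound, for which I need the $L_\infty$ counterpart of Lemma~\ref{lemma_A}: for every $\tau>0$ there is $\lambda>0$ such that every $P\in\Gamma_{L_\infty}(P_X,\lambda,L_{max})$ lies within distance $\tau$ of some $P'\in\Gamma_{L_\infty}(P_X,L_{max})$. Granting this, taking $P=P^*_\lambda$ (the minimiser over the larger set) and using continuity of $\DD$ reproduces (\ref{divergence_continuity_Gamma}) line for line: one gets $\DD(P'||P_Y)\le\varepsilon(\lambda)+\delta(\tau)$, hence $\varepsilon(\lambda)\ge\min_{P\in\Gamma_{L_\infty}(P_X,L_{max})}\DD(P||P_Y)-\delta(\tau)$, and letting $\tau\to0$ closes the argument. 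I expect this lemma to be the main obstacle, and exactly where the $L_\infty$ case departs from the additive one. In the {\em EMD} case one pushes a witness $Q\in\Lambda^*(P_X,\lambda)$ onto $P_X$ at vanishing cost via the triangle inequality and the weak continuity of $W_1$; this fails here because the bottleneck cost $g$ is {\em not} continuous under total-variation convergence (a vanishing mass displaced over a large distance changes $g$ by $O(1)$), so $g(Q,P_X)$ need not be small even when $\DD(Q||P_X)\le\lambda\to0$.

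To circumvent this I would abandon the triangle inequality and work with the cumulative-distribution characterisation of the constraint. By Property~\ref{property_Huffman_dinfty}, $g(P,Q)\le L_{max}$ is equivalent to the pointwise inequalities $C_P(x)\le C_Q(x+L_{max})$ and $C_Q(x)\le C_P(x+L_{max})$ for all $x$, a finite family of linear inequalities in $P$. Pinsker's inequality bounds $\|Q-P_X\|_1$, hence $\|C_Q-C_{P_X}\|_\infty$, by a quantity $\epsilon(\lambda)$ that vanishes as $\lambda\to0$; substituting, any $P\in\Gamma_{L_\infty}(P_X,\lambda,L_{max})$ satisfies the defining inequalities of $\Gamma_{L_\infty}(P_X,L_{max})$ up to an additive slack $\epsilon(\lambda)$. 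It then remains to project such an $\epsilon$-feasible $P$ onto the feasible polytope $\Gamma_{L_\infty}(P_X,L_{max})$ --- which is nonempty, since it contains $P_X$ --- while displacing $P$ by at most $O(\epsilon(\lambda))$. This is exactly a polyhedral error bound (Hoffman's lemma for systems of linear inequalities, distinct from the transportation Hoffman of Section~\ref{sec.Huffman_NWC}), which yields $\|P-P'\|\le\gamma\,\epsilon(\lambda)$ with $\gamma$ depending only on $P_X$, $L_{max}$ and $|\XX|$. Setting $\tau=\gamma\,\epsilon(\lambda)\to0$ establishes the lemma, and the non-decreasing, bounded sequence $\varepsilon(\lambda)$ is squeezed onto the value in (\ref{Gamma_X_infty}), proving (\ref{best_e_e_appendix}).
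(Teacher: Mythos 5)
Your proposal is correct, and its skeleton (monotonicity of $\varepsilon(\lambda)$, the upper bound from $\Gamma_{L_\infty}(P_X,L_{max})\subseteq\Gamma_{L_\infty}(P_X,\lambda,L_{max})$, and a lower bound via an $L_\infty$ analogue of Lemma~\ref{lemma_A} fed into the argument of Theorem~\ref{theorem_EMD}) is exactly the paper's. Where you genuinely diverge is in the proof of the key lemma, which the paper relegates to Appendix~\ref{app.lemma_A_ext}: there, given $P$ mapped into some $Q\in\Lambda^*(P_X,\lambda)$ with bottleneck cost at most $L_{max}$, the authors write $P_X=Q+\gamma$, split the demand into $\min\{P_X,Q\}$ and the negative part $\gamma^-$, discard from $P$ the mass that feeds $\gamma^-$, and re-insert $\gamma^+$ in place (zero displacement), explicitly exhibiting both $P'$ and an admissible map $S'$ carrying $P'$ onto $P_X$ without increasing the maximum shipment distance. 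You instead characterize the constraint $\max_{(i,j):S^{\text{\em NWC}}_{PQ}(i,j)\neq0}|i-j|\le L_{max}$ as a finite system of linear inequalities on cumulative distributions, observe via Pinsker that membership in $\Gamma_{L_\infty}(P_X,\lambda,L_{max})$ implies $\epsilon(\lambda)$-approximate feasibility for the polytope $\Gamma_{L_\infty}(P_X,L_{max})$, and invoke Hoffman's polyhedral error bound to project onto it. Both arguments are sound; the paper's is constructive and self-contained, while yours is shorter, imports a classical result, and yields an explicit rate $\tau=O(\sqrt{\lambda})$. Your diagnosis of why the naive route fails --- the bottleneck cost is not continuous under total-variation convergence, so one cannot simply "transport $Q$ onto $P_X$ at vanishing cost" as in the {\em EMD} case --- is precisely the difficulty the paper alludes to with "requires some more care," and is worth stating. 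The only point to make explicit in your write-up is that the simplex constraints must be included in the polyhedral system so that the projected point $P'$ is itself a pmf.
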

\begin{proof}
The proof relies on the extension  of Property \ref{prop.inclusion} and Lemma \ref{lemma_A} to the $L_{\infty}$ case. The extension of Property \ref{prop.inclusion} is immediate since, once again, the indistinguishability region depends on $\lambda$ only through ${\Lambda^*}(P_X, \lambda)$, whose form does not depend on the particular norm adopted to express the distortion constraint. The extension of Lemma \ref{lemma_A} requires some more care and is proven in Appendix \ref{app.lemma_A_ext}. For the rest, the theorem can be proven by reasoning as in the proof of Theorem \ref{theorem_EMD}.
\end{proof}
As a consequence of Theorem \ref{theorem_EMD_Linfty}, the distinguishability of two sources depends again on the optimum transportation map between the pmf's of the two sources. Specifically, given the sources $X$ and $Y$, the defender is able to distinguish between them in this adversarial setting, only if
\begin{equation}
\label{eq.condLinf}
\max_{(i,j) \in S^{\text{\em NWC}}_{P_Y P_X}} |i - j| > L_{max}.
\end{equation}
Condition \eqref{eq.condLinf} can be used to determine the maximum attacking distortion for which D is able to distinguish the sources $X$ and $Y$, i.e. $\SS\MM(P_X,P_Y)$.
\begin{definition}[Security Margin for the $L_{\infty}$ case]
Let $X \sim P_X$ and $Y \sim P_Y$ be two discrete memoryless sources. The maximum distortion for which the two sources can be reliably distinguished in the $SI_{ks}$ setting with $L_{\infty}$ distance is given by
\begin{equation}
\SS \MM_{L_{\infty}}(P_Y, P_X) = \max_{(i,j) : S^{\text{\em NWC}}_{P_Y P_X}(i,j) \neq 0} |i-j|,
\label{security_margin_discrete}
\end{equation}
where $S^{\text{\em NWC}}_{P_Y P_X}$ is obtained by applying the {\em NWC} rule to map $P_Y$ into $P_X$.
\end{definition}

Even if we proved Theorem \ref{theorem_EMD_Linfty} for the case of known sources, it is possible to extend it to the $SI_{tr}$ game. The proof goes along the same lines used for the $SI_{ks}$ case and is omitted for sake of brevity.

\section{Conclusions}
\label{sec.conc}

By interpreting the attacker's optimum strategy in the $SI_{ks}$ (and $SI_{tr}$) game as the solution of an optimum transport problem, we introduced the concept of security margin, a single measure summarizing the distinguishability of two sources under adversarial conditions. We also described an efficient algorithm to compute the security margin between several classes of sources. By relying on the security margin concept, we can understand who between the attacker and the defender is going to win the source identification game under asymptotic conditions. Among the practical applications of our analysis we mention image forensics, wherein the defender is interested in distinguishing images produced by different devices, and intrusion detection, in which the defender is willing to distinguish normal and anomalous behaviors. In the first case, knowing the $\SS\MM$ between the statistics ruling the emission of images from different sources permits to compute the amount of distortion required to make the images produced by the two sources indistinguishable. In the latter case, the $\SS \MM$ determines how much an intruder must deviate from the intended, anomalous, behavior to make its presence undetectable by the analyst.

\section*{Acknowledgment}
We thank Alessandro Agnetis for the useful discussions on the optimization problems underlying the computation of the {\em EMD}.

This work has been partially supported by the European Office of Aerospace Research and Development under Grant FA8655-12-1- 2138: AMULET - A multi-clue approach to image forensics, and the the REWIND Project, funded by the Future and Emerging Technologies (FET) programme within the 7FP of the EC, under grant 268478.

  \bibliographystyle{IEEEtran}
\bibliography{TIT_SM}

\numberwithin{equation}{section}
\appendix

\renewcommand{\theequation}{A\arabic{equation}}

\subsection{Behavior of $\Gamma(P_X, \lambda,L_{max})$ and $\Gamma_{tr}(R, \lambda,L_{max})$ for $\lambda \rarrow 0$.}
\label{app.lemma_A}

We start by studying the behavior of $\Gamma(P_X, \lambda,L_{max})$  when $\lambda \rarrow 0$. More specifically, we show that for small values of $\lambda$ the set $\Gamma(P_X, \lambda,L_{max})$ approaches $\Gamma(P_X, L_{max})$ smoothly.

As a first step, we highlight the following property.
\begin{property}
\label{property_EMD}
{\em EMD}$(P,Q)$ is a continuous and convex function of $P$ and $Q$.
\end{property}
\begin{proof}
Property \ref{property_EMD} follows immediately if we look at the {\em EMD} as the solution of a Linear Programming (LP) problem (see Section \ref{sec.Huffman_NWC}), wherein $P$ and $Q$ are the known terms of the linear constraints. In fact, it is a known result in operations research that the minimum of the objective function of an LP problem is a continuous and convex function of the known terms of the linear constraints \cite{LinearOpt}.
\end{proof}

By exploiting the continuity of the divergence and the continuity and convexity of the {\em EMD}, we now show that when $\lambda$ tends to 0, the set $\Gamma(P_X, \lambda, L_{max})$ tends to  $\Gamma(P_X, L_{max})$ regularly. More precisely, the following lemma holds.
\begin{lemma}
\label{lemma_A}
Let $X \sim P_X$ be an information source and $L_{max}$ the maximum allowable average per-letter distortion in the $SI_{ks}$ game. The set $\Gamma(P_X,\lambda,L_{max})$, defined in \eqref{eq.indistinguib_EMD}, satisfies the following property:
\begin{align}
\forall \tau > 0, & \exists \lambda > 0   \text{ s.t.  }  \forall P \in \Gamma(P_X, \lambda, L_{max})\\
 & \exists P' \in \Gamma(P_X, L_{max}) \text{ s.t.  } P \in B(P', \tau),\nonumber
\label{lemma_A_rel}
\end{align}
where $\Gamma(P_X,L_{max})$ is defined as in \eqref{Gamma_X} and  $B(P', \tau)$ is a ball centered in $P'$ with radius $\tau$.
\end{lemma}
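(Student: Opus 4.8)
The plan is to recognize that $\Gamma(P_X,L_{max})$ is exactly the limiting set obtained by driving $\lambda\to 0$ in $\Gamma(P_X,\lambda,L_{max})$: since $\DD(Q\|P_X)=0$ iff $Q=P_X$, the acceptance region ${\Lambda^*}(P_X,\lambda)$ collapses to the single point $\{P_X\}$. So, given $P\in\Gamma(P_X,\lambda,L_{max})$, there is a witness $Q\in{\Lambda^*}(P_X,\lambda)$ with $\text{\em EMD}(P,Q)\le L_{max}$, and the first thing I would do is show that this $Q$ is necessarily close to $P_X$. This comes directly from Pinsker's inequality \cite{CandT}: $\|Q-P_X\|_1\le\sqrt{2\,\DD(Q\|P_X)}\le\sqrt{2\lambda}$ (up to the constant fixing the logarithm base), so $Q\to P_X$ uniformly over all admissible witnesses as $\lambda\to 0$.

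The second step is to transfer the constraint $\text{\em EMD}(P,Q)\le L_{max}$ into a constraint on $\text{\em EMD}(P,P_X)$. Here I would invoke Property~\ref{property_EMD}: the {\em EMD} is continuous on the compact simplex $\PP\times\PP$, hence uniformly continuous. Thus for any prescribed $\eta>0$ there is $\delta>0$ such that $\|Q-P_X\|_1<\delta$ forces $|\text{\em EMD}(P,Q)-\text{\em EMD}(P,P_X)|<\eta$ simultaneously for every $P$. Choosing $\lambda$ small enough that $\sqrt{2\lambda}<\delta$ then yields $\text{\em EMD}(P,P_X)\le\text{\em EMD}(P,Q)+\eta\le L_{max}+\eta$ for every $P\in\Gamma(P_X,\lambda,L_{max})$.

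The third step repairs the residual slack $\eta$ by a small convex contraction toward $P_X$. If already $\text{\em EMD}(P,P_X)\le L_{max}$, take $P'=P$. Otherwise set $P'=(1-\theta)P+\theta P_X$; viewing $(P',P_X)$ as the convex combination $(1-\theta)(P,P_X)+\theta(P_X,P_X)$ and using the joint convexity of the {\em EMD} (again Property~\ref{property_EMD}) together with $\text{\em EMD}(P_X,P_X)=0$,
\[
\text{\em EMD}(P',P_X)\ \le\ (1-\theta)\,\text{\em EMD}(P,P_X)\ \le\ (1-\theta)(L_{max}+\eta).
\]
Choosing $\theta=\eta/(L_{max}+\eta)$ makes the right-hand side equal to $L_{max}$, so $P'\in\Gamma(P_X,L_{max})$. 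Since $P'-P=\theta(P_X-P)$, we have $\|P'-P\|_1=\theta\|P_X-P\|_1\le 2\theta=2\eta/(L_{max}+\eta)$, which tends to $0$ with $\eta$. Given $\tau$, I would therefore first fix $\eta$ so that $2\eta/(L_{max}+\eta)\le\tau$ and then fix $\lambda$ as above, which establishes the stated property.

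The hard part, and the point that most needs care, is securing uniformity in $P$: pointwise continuity of the {\em EMD} only gives $\text{\em EMD}(P,Q)\to\text{\em EMD}(P,P_X)$ for each fixed $P$, whereas the lemma demands a single $\lambda$ serving all $P\in\Gamma(P_X,\lambda,L_{max})$ at once. Compactness of the simplex, which upgrades continuity to uniform continuity, is precisely what closes this gap, and the convexity of the {\em EMD} is the structural ingredient that lets the contraction step return a neighbour $P'$ that is simultaneously inside the target region and arbitrarily close to $P$. The one degenerate case, $L_{max}=0$, does not fit the contraction formula, but there $\Gamma(P_X,\lambda,0)={\Lambda^*}(P_X,\lambda)$ and the conclusion follows at once from the Pinsker bound of the first step.
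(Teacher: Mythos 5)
Your proof is correct, and while it shares the paper's overall two-step skeleton (first pull the witness $Q\in\Lambda^*(P_X,\lambda)$ onto $P_X$ using the vanishing of the divergence and the continuity of the {\em EMD}, then absorb the residual slack using convexity), the way you close the slack is genuinely different from the paper's. The paper works with the fattened set $\Gamma_\tau(P_X,L_{max})$ and shows that the sublevel set $\{P:\text{{\em EMD}}(P,P_X)\le L_{max}+\delta(\tau)\}$ sits inside it, where $\delta(\tau)=\min_{P'\in\mathcal{B}(\Gamma(P_X,L_{max}))}\mu(P',\tau)$ is obtained by measuring, for each boundary point $P'$, how much the {\em EMD} grows when one steps outward by $\tau$ along the ray from $P_X$; this requires (and the paper does not fully spell out) that this minimum over the boundary is strictly positive. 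Your argument replaces that geometric inclusion with an explicit convex contraction $P'=(1-\theta)P+\theta P_X$, using the joint convexity of the {\em EMD} (Property~\ref{property_EMD}) and $\text{{\em EMD}}(P_X,P_X)=0$ to force $P'$ into $\Gamma(P_X,L_{max})$ with an explicit displacement bound $\|P'-P\|_1\le 2\eta/(L_{max}+\eta)$. This buys you two things: fully quantitative constants (Pinsker gives $\|Q-P_X\|_1\le\sqrt{2\lambda}$ where the paper only invokes continuity of $\DD$), and immunity from the positivity-of-the-minimum issue; the price is that you need the \emph{joint} convexity of the {\em EMD} in both arguments, which Property~\ref{property_EMD} does supply since the {\em EMD} is the value of an LP as a function of the right-hand sides of its constraints. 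Your separate treatment of $L_{max}=0$ is a sensible addition that the paper leaves implicit.
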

\begin{figure}[t!]
\centering \includegraphics[width = 0.75\columnwidth]{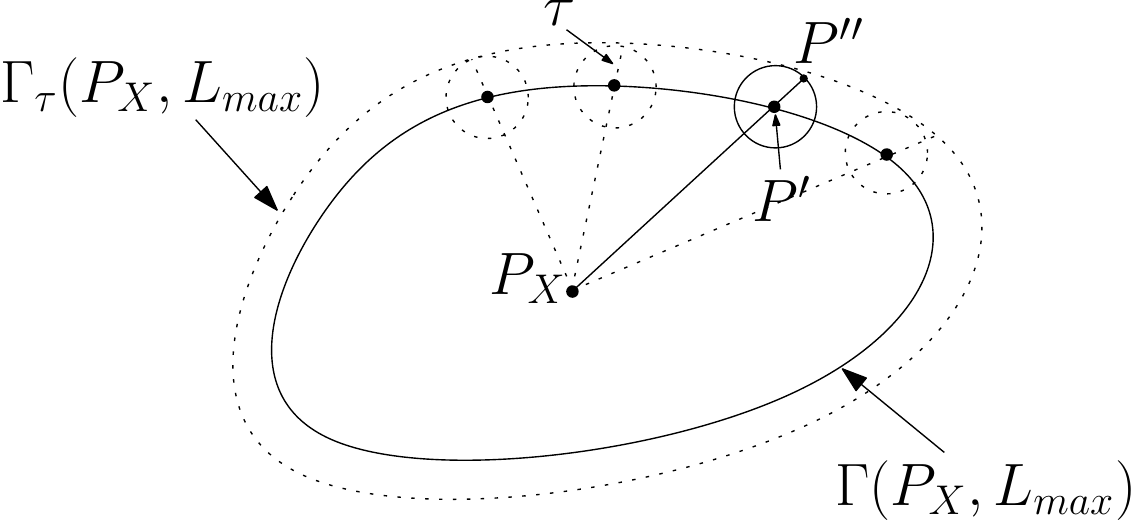}
\caption{Graphical representation of the set $\Gamma_{\tau}(P_X, L_{max})$.}
\label{fig.sketch_set}
\end{figure}
\begin{proof}
\noindent Throughout the proof we will refer to Figure \ref{fig.sketch_set} where all the sets and quantities involved in the proof are sketched. For any $\tau > 0$, we consider the set:
\begin{align}
&\Gamma_{\tau}(P_X,L_{max}) = \\ \nonumber
& \{P : \exists P' \in \Gamma(P_X, L_{max}) \text{  s.t.  }  P \in B(P', \tau)\}.
\end{align}
With such a definition, we can rephrase (\ref{lemma_A_rel}) as follows:
\begin{align}
\label{lemma_A_rel2}
\forall \tau > 0, \exists  \lambda > 0   \text{ s.t.} \text{  } \Gamma(P_X, \lambda, L_{max})\subseteq \Gamma_{\tau}(P_X, L_{max}).
\end{align}
%

For sake of simplicity, we will prove a slightly stronger version of the lemma by means of the following two-step proof. First, we will show that a subset of $\Gamma_{\tau}(P_X,L_{max})$ exists having the following form:
\begin{equation}
\Gamma_{\tau}^{sub}(P_X,L_{max}) = \{P : \text{\em EMD}(P,P_X) \le L_{max} + \delta(\tau)\},
\end{equation}
for some $\delta(\tau) > 0$. Then, we will prove that for small enough $\lambda$, any $P \in \Gamma(P_X, \lambda, L_{max})$ belongs to $\Gamma_{\tau}^{sub}(P_X,L_{max})$.\\
To start with, let $P'$ be any point on $\mathcal{B}(\Gamma(P_X,L_{max}))$, the boundary of $\Gamma(P_X,L_{max})$. Among all the the points on the boundary of the ball of radius $\tau$ and centered in $P'$, consider the one, name it $P''$, lying along the direction given by the line joining $P_X$ and $P'$ and falling outside $\Gamma(P_X,D_{max})$ (see Figure \ref{fig.sketch_set}). By the convexity of the {\em EMD} (Property \ref{property_EMD}) and since {\em EMD = 0} if and only if $P=P_X$, we conclude that {\em EMD}$(P'',P_X) > $ {\em EMD}$(P',P_X)$. Since $P'$ lies on the boundary of $\Gamma(P_X,L_{max})$ we know that {\em EMD}$(P'',P_X) = L_{max} + \mu$, where $\mu = \mu(P',\tau)$ is a strictly positive quantity. We now show that the first part the proof holds by letting $\delta(\tau) = \min_{P' \in \mathcal{B}(\Gamma(P_X,L_{max}))} \mu(P',\tau)$.
To this purpose, let $P$ be any point in set $\Gamma^{sub}_{\tau}(P_X,L_{max})$ for the above choice of $\delta(\tau)$.
If $P \in \Gamma(P_X,L_{max})$, then, by definition, $P$ also belongs to $\Gamma_{\tau}(P_X,L_{max})$. On the other side, if $P$ lies outside $\Gamma(P_X,L_{max})$, let us denote by $P^*$ the point lying on the boundary of the set $\Gamma(P_X,L_{max})$ along the line joining $P$ and $P_X$, and let $P^{**}$ be the point where the same line crosses the ball $B(P^*,\tau)$ outside $\Gamma(P_X,L_{max})$. Now, {\em EMD}$(P,P_X) \le L_{max} + \delta(\tau) \le$ {\em EMD}$(P^{**},P_X)$ by construction. Because of the convexity of {\em EMD}, then $P \in B(P^*,\tau)$ as required.\\
Let us now pass to the second part of the proof.
First, we notice  that set $\Gamma(P_X,\lambda, L_{max})$ depends on $\lambda$ only through the acceptance region $\Lambda^*(P_X,\lambda)$. If $\lambda$ is small, due to the continuity of the divergence, for any $Q \in \Lambda^*(P_X,\lambda)$ we will have $Q \in B(P_X,\kappa(\lambda))$ for some $\kappa(\lambda)$ such that $\kappa(\lambda) \rarrow 0$ when $\lambda \rarrow 0$. Let, then, $P$ be a pmf in $\Gamma(P_X,\lambda, L_{max})$. By definition, a $Q \in \Lambda^*(P_X,\lambda)$ exists s.t. {\em EMD}$(P,Q) \le L_{max}$. If $\lambda$ is small, due to the proximity of $Q$ to $P_X$ and the continuity of the {\em EMD} we have that {\em EMD}$(P,P_X) < ${\em EMD}$(P,Q) + \eta(\lambda) \le L_{max} + \eta(\lambda)$ with $\eta(\lambda)$ approaching 0 when $\lambda \rarrow 0$. In particular, if $\lambda$ is small enough $\eta(\lambda) < \delta(\tau)$ and hence $P \in\Gamma^{sub}_{\tau}(P_X, L_{max})$ which in turn is entirely contained in $\Gamma_{\tau}(P_X, L_{max})$ thus completing the proof.
\end{proof}

In the same way, we can prove that Lemma \ref{lemma_A} holds also when $\Gamma(P_X, \lambda,L_{max})$ is replaced by $\Gamma_{tr}(R, \lambda,L_{max})$ and $\Gamma(P_X, L_{max})$ by $\Gamma(R, L_{max})$ with a generic $R$ instead of $P_X$.
To be convinced about that, it is sufficient to note that the only difference between $\Gamma$ and $\Gamma_{tr}$ relies on the test function which defines the acceptance region, respectively the divergence and the $h_c$ function. Since the $h_c$ function is still a continuous and convex function and, likewise $\DD$, is equal to zero if and only if its arguments are identical, the proof that we used for Lemma \ref{lemma_A} still holds.

\subsection{Behavior of $\Gamma_{L_{\infty}}(P_X, \lambda,L_{max})$ for $\lambda \rarrow 0$.}
\label{app.lemma_A_ext}
We prove that when $\lambda \rarrow 0$, $\Gamma_{L_{\infty}}(P_X, \lambda,L_{max})$  approaches $\Gamma_{L_{\infty}}(P_X, L_{max})$ regularly, in the sense stated by the following lemma.
\begin{lemma}[Extension of Lemma \ref{lemma_A} to the $L_{\infty}$ case]
Let $X \sim P_X$ be an information source and $L_{max}$ the maximum per-sample distortion allowed to the attacker. The set $\Gamma_{L_{\infty}}(P_X, \lambda, L_{max})$, defined in Section \ref{sec.maxdiff}, satisfies the following property:
\begin{align}
\label{lemma_A_rel_inf}
\forall \tau > 0, \exists & \lambda > 0   \text{ s.t.},  \text{  }  \forall P \in \Gamma_{L_{\infty}}(P_X, \lambda, L_{max})\\
 & \exists P' \in \Gamma_{L_{\infty}}(P_X, L_{max}) \text{ s.t. } P \in B(P', \tau),\nonumber
\end{align}
where $B(P', \tau)$ is a ball centered in $P'$ with radius $\tau$.
\end{lemma}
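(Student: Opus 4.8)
The plan is to abandon the continuity-and-convexity route used for Lemma~\ref{lemma_A}, because the quantity that now governs the constraint, $g(P,Q)\triangleq\max_{(i,j):S^{\text{\em NWC}}_{PQ}(i,j)\neq 0}|i-j|$, is integer valued and hence neither continuous nor convex in $(P,Q)$. The key idea I would use instead is to turn the constraint $g(P,Q)\le L_{max}$ into a finite system of \emph{linear} inequalities. By Property~\ref{property_Huffman_dinfty} the \text{\em NWC} map is the comonotone (quantile) coupling, so $g(P,Q)=\sup_{t\in(0,1)}|C^{-1}_P(t)-C^{-1}_Q(t)|$, where $C_P$ is the cumulative function of $P$ and $C^{-1}_P$ its generalized inverse. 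Since $g$ is integer valued we may replace $L_{max}$ by $\lfloor L_{max}\rfloor$, and a Strassen-type argument gives the equivalence
\begin{equation}
g(P,Q)\le L_{max}\;\Longleftrightarrow\; C_P(i)\le C_Q(i+L_{max})\ \text{and}\ C_Q(i)\le C_P(i+L_{max})\quad\forall i,
\end{equation}
with the convention that the cumulative function equals $1$ beyond the alphabet. In particular $\Gamma_{L_{\infty}}(P_X,L_{max})$ of \eqref{Gamma_X_infty} is exactly the polytope carved out of the probability simplex by the inequalities $C_P(i)\le C_{P_X}(i+L_{max})$ and $C_{P_X}(i)\le C_P(i+L_{max})$, and it is nonempty since it contains $P_X$.

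Next, mirroring the two steps of the proof of Lemma~\ref{lemma_A}, I would control the witness $Q$. If $P\in\Gamma_{L_{\infty}}(P_X,\lambda,L_{max})$ then by \eqref{eq.indistinguib_infty} there is $Q\in{\Lambda^*}(P_X,\lambda)$ with $g(P,Q)\le L_{max}$. The same continuity of the divergence invoked in Lemma~\ref{lemma_A} (concretely, Pinsker's inequality applied to $\DD(Q\|P_X)\le\lambda$) gives $Q\in B(P_X,\kappa(\lambda))$ with $\kappa(\lambda)\to 0$ as $\lambda\to 0$, hence $|C_Q(i)-C_{P_X}(i)|\le\kappa(\lambda)$ for every $i$. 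Substituting this into the characterization above yields the \emph{relaxed} inequalities $C_P(i)\le C_{P_X}(i+L_{max})+\kappa(\lambda)$ and $C_{P_X}(i)\le C_P(i+L_{max})+\kappa(\lambda)$ for all $i$: that is, $P$ satisfies every defining inequality of the polytope $\Gamma_{L_{\infty}}(P_X,L_{max})$ up to an additive slack $\kappa(\lambda)$ that vanishes with $\lambda$.

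Finally I would close the argument with a Lipschitz-stability estimate for polyhedra. Because $\Gamma_{L_{\infty}}(P_X,L_{max})$ is a nonempty polytope described by finitely many linear inequalities with a fixed constraint matrix (the simplex constraints included), Hoffman's error bound for systems of linear inequalities provides a constant $\mu=\mu(P_X,L_{max})$ such that the distance from any point to the polytope is at most $\mu$ times its largest constraint violation. Applying it to $P$, whose violations are bounded by $\kappa(\lambda)$, yields a genuine pmf $P'\in\Gamma_{L_{\infty}}(P_X,L_{max})$ with $\|P-P'\|\le\mu\,\kappa(\lambda)$. Choosing $\lambda$ small enough that $\mu\,\kappa(\lambda)<\tau$ delivers $P\in B(P',\tau)$, which is precisely \eqref{lemma_A_rel_inf}; Theorem~\ref{theorem_EMD_Linfty} then follows by repeating the limiting argument of Theorem~\ref{theorem_EMD} verbatim.

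I expect the principal difficulty to lie in the first step, namely establishing the linear-inequality characterization of $g(P,Q)\le L_{max}$ rigorously from the \text{\em NWC}/comonotone structure, handling the boundary conventions when $i+L_{max}$ exceeds the alphabet, and verifying that the resulting polytope coincides \emph{exactly} with the set \eqref{Gamma_X_infty} rather than with a strict super- or sub-set. Once this characterization is secured, the perturbation estimate and the Hoffman-type stability step are routine consequences of the finiteness of $\XX$.
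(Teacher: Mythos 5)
Your argument is correct, but it follows a genuinely different route from the paper's. The paper proceeds by direct construction: it writes $P_X(j)=Q(j)+\gamma(j)$, splits $\gamma$ into its positive and negative parts $\gamma^{+},\gamma^{-}$, decomposes the admissible map $S_{PQ}$ into the sub-map serving the common demand $D(j)=\min\{P_X(j),Q(j)\}$ and the sub-map serving $\gamma^{-}$, and then builds $P'$ explicitly by deleting the mass routed to $\gamma^{-}$ and re-inserting $\gamma^{+}$ via self-shipments $i\to i$; since only zero-displacement shipments are added, the $L_\infty$ constraint is preserved and $\|P-P'\|_1<\gamma(\lambda)$ by construction. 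You instead linearize the problem: you characterize the constraint $\max_{(i,j):S^{\text{\em NWC}}_{PQ}(i,j)\neq 0}|i-j|\le L_{max}$ by the shifted-CDF inequalities $C_P(i)\le C_Q(i+\lfloor L_{max}\rfloor)$ and $C_Q(i)\le C_P(i+\lfloor L_{max}\rfloor)$, so that $\Gamma_{L_\infty}(P_X,L_{max})$ becomes a nonempty polytope, and then convert the $\kappa(\lambda)$-slack in these inequalities into a distance bound via Hoffman's error bound for linear systems. I checked the CDF characterization you flag as the main risk: it is correct for finite alphabets, because a pair $(i,j)$ with $j>i+L$ receives positive {\em NWC} mass exactly when the quantile intervals $(C_P(i-1),C_P(i)]$ and $(C_Q(j-1),C_Q(j)]$ overlap with positive length, which is ruled out for all such $j$ precisely by $C_P(i)\le C_Q(i+L)$, and symmetrically. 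The trade-off is clear: the paper's proof is elementary, self-contained and exhibits $P'$ and the new map $S'$ explicitly, while yours buys a quantitative Lipschitz-type stability estimate $\|P-P'\|\le\mu\,\kappa(\lambda)$ with a constant depending only on $(\XX,P_X,L_{max})$, at the price of importing the error-bound machinery and of proving the polytope identification carefully. One caution if you write this up: the ``Hoffman'' of the error bound is a different result from the Hoffman greedy/{\em NWC} algorithm already cited in Section \ref{sec.Huffman_NWC}, so the reference should be made explicit to avoid confusion.
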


\begin{proof}
We will prove the lemma by assuming that the distance defining the ball $B(P', \tau)$ is the $L_1$ distance, extending the proof to other distances being straightforward.

For a fixed $\tau > 0$, let $P$ be a pmf in  $\Gamma_{L_{\infty}}(P_X, \lambda, L_{max})$ for some $\lambda$. This means that at least one pmf $Q \in \Lambda^*(P_X, \lambda)$ exists, such that $P$ can be mapped into $Q$ with maximum shipment distance lower than or equal to $L_{max}$. From equation \eqref{eq.lambda_limit} and by exploiting the continuity of the divergence function, we argue that $Q \in \mathcal{B}(P_X, \gamma(\lambda))$ for some positive $\gamma(\lambda)$, and where $\gamma(\lambda) \rarrow 0$ as $\lambda \rarrow 0$. Accordingly, $P_X$ can be written as $P_X(j) = Q(j) + \gamma(j)$, $\forall j$, where $\sum_{j \in \XX} |\gamma(j)| < \gamma(\lambda)$. Note that, by construction, $\sum_j \gamma(j) = 0$ and $\gamma(j) \rarrow 0$ when $\lambda \rarrow 0$. Let $S_{PQ}$ be an admissible map bringing $P$ into $Q$ (such a map surely exists by construction). We prove the lemma by explicitly building a pmf $P'$ and a new admissible transportation map $S'$, such that, $P'$ is arbitrarily close to $P$ (for a small enough $\lambda$) and $S'$ maps $P'$ into $P_X$.
We start by introducing two new quantities, namely $\gamma^+(j)$, defined as follows:
\begin{align}
\label{eq.gammapiu}
    \gamma^+(j) & = \gamma(j)  &\text{if } P_X(j)-Q(j) \ge 0 \\ \nonumber
    \gamma^+(j) & = 0  &\text{if } P_X(j)-Q(j) < 0,
\end{align}
and $\gamma^-(j)$ defined as
\begin{align}
\label{eq.gammameno}
    \gamma^-(j) & = -\gamma(j) &\text{if } P_X(j) - Q(j) < 0 \\ \nonumber
    \gamma^-(j) & = 0  &\text{if } P_X(j) - Q(j) \ge 0.
\end{align}
A graphical interpretation of $\gamma^+$ and $\gamma^-$ is given in Figure \ref{fig.last}. Clearly, $\sum_j \gamma^-(j) = \sum_j \gamma^+(j)$.
\begin{figure}[t!]
\centering \includegraphics[width = 0.65\columnwidth]{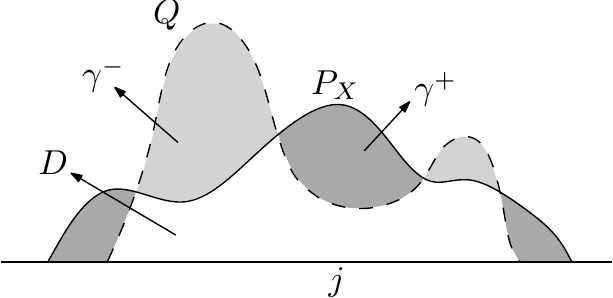}
\caption{Geometric interpretation of $\gamma^+$, $\gamma^-$ and $D(j)$.}
\label{fig.last}
\end{figure}
With the above definitions, we can look at the demand distribution $Q$ as consisting of two amounts: the mass distribution $D$, with $D(j) = \min\{P_X(j),Q(j)\}$, and $\gamma^-$. According to the superposition principle, the map $S_{PQ}$ can then be split into two sub-maps: one which satisfies the demand of $D$ (let us call it $S_{PQ}^D$), and one that satisfies the demand of $\gamma^-$ (let us call it $S_{PQ}^{\gamma}$). The same distinction can be made in the source distribution, as follows:
\begin{align}
P(i) = & \sum_{j} S_{PQ}(i,j)\\
 = & \sum_j S_{PQ}^D(i,j)  + \sum_j S_{PQ}^{\gamma}(i,j) = P_D(i) + P_{\gamma}(i), \nonumber
\end{align}
where $P_D$ and $P_{\gamma}$ are the masses in the source distribution which are used to satisfy the mass demand pertaining to $D$ and $\gamma^-$ according to mapping $S_{PQ}$. Then, $\sum_i P_D(i) = D$ and $\sum_i P_{\gamma}(i) = \gamma^-$.
In order to construct the pmf $P'$ we are looking for, we simply remove from $P$ the amount of mass $P_{\gamma}$ used to fill $\gamma^-$ and redistribute it according to $\gamma^+$. Specifically, we have
\begin{align}
& P'(i) = P_D(i) + \gamma^+(i) \\
& S'(i,j) = S_{PQ}^D(i,j) + \gamma^+(j) \delta(i,j),
\end{align}
where $\delta(i,j)$ is equal to 1 if $i=j$ and 0 otherwise. It is easy to see that applying the transportation map $S'(i,j)$ to $P'$ yields $P_X$.
Besides, from the procedure adopted to build $S'$, it is evident that
\begin{equation}
\label{eq.distconst}
    \max_{(i,j): S'(i,j) \ne 0} |i-j| \le \max_{(i,j): S_{PQ}(i,j) \ne 0} |i-j| \le L_{max},
\end{equation}
(the only new shipments introduced are from a bin to itself).
In addition, the distance between $P'$ and $P$ is, by construction, lower than $\gamma(\lambda)$, which can be made arbitrarily small by decreasing $\lambda$, thus completing the proof of the lemma.
\end{proof}

\end{document}